\numberwithin{equation}{section}
\newtheorem{theorem}{Theorem}[section]
\newtheorem{proposition}[theorem]{Proposition}
\newtheorem{definition}[theorem]{Definition}
\newtheorem{remark}[theorem]{Remark}
\newtheorem{corollary}[theorem]{Corollary}
\newcommand{\p}{\partial}
\newcommand{\Pf}{\text{Pf}}
 \newcommand{\Rmnum}[1]{\expandafter\@slowromancap\romannumeral #1@}
\begin{document}

\title[Christoffel transformations for SOPs and PSOPs]{Christoffel transformations for (partial-)skew-orthogonal polynomials and applications}

\author{Shi-Hao Li}
\address{}
\email{lishihao@lsec.cc.ac.cn}
\author{Guo-Fu Yu}
\address{School of Mathematical Sciences, Shanghai Jiaotong University, People's Republic of China.}
\email{gfyu@sjtu.edu.cn}

\dedicatory{In celebration of Professor Peter J Forrester's $60^{\text{th}}$ birthday}

\subjclass[2010]{37K10, 37K20, 15A15}
\date{}

\keywords{Christoffel transformations; Skew-orthogonal polynomials; partial-skew-orthogonal polynomials; Pfaffian tau-functions}

\begin{abstract}
In this article, we consider the Christoffel transformations for skew-orthogonal polynomials and partial-skew-orthogonal polynomials. We demonstrate that the Christoffel transformations can act as spectral problems for discrete integrable hierarchies, and therefore we derive certain integrable hierarchies from these transformations. Some reductional cases are also considered.
\end{abstract}

\maketitle
\section{Introduction}

The theory of orthogonal polynomials is an important topic in modern analysis. In particular, it has many different applications in contexts of mathematical physics such as random matrix theory and integrable system, for example, see \cite{deift99,forrester10,hjn16,mehta04,vanassche17}. One of the key features of an orthogonal polynomial sequence is the three term recurrence. This relation, also referred to as the spectral problem, is connected with integrable systems when time evolutions are permitted. Apart from the continuous spectral problem, attention is also paid to the discrete spectral transformation, namely the Christoffel transformation.
The Christoffel transformation is given in terms of an adjacent family of orthogonal polynomials \cite{spicer06,tsujimoto00}
\begin{align*}
P_n^{(m)}(x)=\frac{1}{\tau_{n-1}^{(m)}}\det\left(\begin{array}{ccc}
c_m&\cdots&c_{n+m}\\
\vdots&&\vdots\\
c_{n+m-1}&\cdots&c_{2n+m-1}\\
1&\cdots&x^n\end{array}
\right),\quad \tau_{n-1}^{(m)}=\det(c_{m+i+j})_{i,j=0,\cdots,n-1}.
\end{align*}
By directly using determinant identities (see \cite[Eqs. (1.1.22a-b)]{spicer06}), one can find the recurrence
\begin{align*}
P_n^{(m)}(x)&=xP_{n-1}^{(m+1)}(x)-\frac{\tau_{n-1}^{(m+1)}\tau_{n-2}^{(m)}}{\tau_{n-2}^{(m+1)}\tau_{n-1}^{(m)}}P_{n-1}^{(m)}(x),\\ P_n^{(m)}(x)&=xP_{n-1}^{(m+2)}(x)-\frac{\tau_{n-1}^{(m+1)}\tau_{n-2}^{(m+1)}}{\tau_{n-1}^{(m)}\tau_{n-2}^{(m+2)}}P_{n-1}^{(m+1)}(x).
\end{align*}
This kind of spectral transformation is useful in finding a relationship with discrete integrable systems since the discrete index $m$ naturally appears in the adjacent orthogonal polynomials. In the literature, there are many applications of Christoffel transformations to classical integrable systems. For example, in \cite{chang14,spicer11}, the Christoffel transformations were applied to finding discrete Toda systems with higher analogues, which are related to the qd and qqd algorithms. In \cite{alv17}, Christoffel transformation for matrix orthogonal polynomials was considered, and its connection with non-abelian 2D Toda lattice hierarchy was found. Moreover, the Christoffel transformation for multivariate orthogonal polynomials was considered in \cite{ari18}, and its connection with integrable system was considered in \cite{ari14}.

In this work, we mainly consider the Christoffel transformations for skew orthogonal polynomials (SOPs) and partial-skew-orthogonal polynomials (PSOPs) with applications in the theory of classical integrable systems. SOPs are well known in the studies of random matrix theory, as they are the characteristic polynomials of celebrated orthogonal and symplectic ensembles with their specified Pfaffian structures.
Besides, these polynomials are also applicable to integrable system. A connection between SOPs and the so-called Pfaff lattice was firstly considered in \cite{adler99,adler02}, and later considered in the geometric setting \cite{kodama07,kodama09}. In \cite{miki12}, the discrete Pfaff lattice was considered by using the discrete spectral transformation of SOPs with Pfaffian tau functions.

We emphasise that Pfaffian tau functions are not only important in integrable systems \cite{adler022} but play a significant role in quantum field theory such as 2D Ising model, dimer models and 1D XY chain \cite{auyang87,perk84}. Therefore, Pfaffian tau functions are very worthy of study.
We remark that in addition to the above mentioned SOPs and even-order Pfaffian tau functions, one can obtain odd-order Pfaffian tau functions from a generalised Wick's theorem.
One can therefore naturally ask about the odd-order Pfaffian tau functions and corresponding polynomials theory. In \cite{chang182}, the concept of PSOPs was proposed and the reason to call these polynomials PSOPs is that these odd-order polynomials are not skew orthogonal with the even ones. Though not skew-orthogonal, by making use of these polynomials, many interesting integrable lattice were found with applications in convergence acceleration algorithms, vector Pad\'e approximation and condensation algorithms for Pfaffians \cite{li20}. More importantly, one specified PSOPs is related to the Bures ensemble with potential application in quantum information theory \cite{forrester16}.
Therefore, the Christoffel transformations for SOPs and PSOPs are not merely important in orthogonal polynomials theory itself but with potential and established applications in many other subjects.

In Section \ref{ct}, we firstly give a brief review of SOPs and PSOPs.
By employing Pfaffian identities, we give their Christoffel transformations. For SOPs $\{P_n^{(m)}(z)\}_{n,m\in\mathbb{N}}$, we have the transformations
\begin{align*}
&P_{2n+1}^{(m)}(z)-\mathcal{A}^m_n P_{2n}^{(m)}(z)=z\left(
P_{2n}^{(m+1)}(z)-\mathcal{B}^m_n P_{2n-2}^{(m+1)}(z)
\right),\\
&P_{2n+2}^{(m)}(z)-\mathcal{C}^m_n P_{2n}^{(m)}(z)=z\left(
P_{2n+1}^{(m+1)}(z)-\mathcal{D}^m_nP_{2n}^{(m+1)}(z)\right),
\end{align*}
with proper coefficients $\mathcal{A}_n^m$, $\mathcal{B}_n^m$, $\mathcal{C}_n^m$ and $\mathcal{D}_n^m$. This Christoffel transformation is slightly different from the one in \cite[Thm. 3]{miki12} since here the formula only involves two adjacent families of SOPs. Regarding PSOPs $\{Q_n^{(m)}(z)\}_{n,m\in\mathbb{N}}$, the Christoffel transformation could be  identically written as
\begin{align*}
Q^{(m)}_{n+1}(z)+\mathcal{\xi}^{m}_nQ^{(m)}_n(z)=z\left( Q_n^{(m+1)}(z)+\mathcal{\eta}^m_nQ^{(m+1)}_{n-1}(z) \right),
\end{align*}
with coefficients $\xi_n^m$ and $\eta_n^m$ properly chosen. Moreover, we find a multi-component version of odd-order PSOPs, and therefore give a multi-component Christoffel transformation as well.

In Section \ref{sop_ct}, we manifest how to make use of the Christoffel transformation of SOPs. By expanding the SOPs in terms of monomial with coefficients expressed by Pfaffian tau functions, one can easily get the DKP (or Pfaff-lattice) hierarchy from the Christoffel transformation. Moreover, we consider a reductional case---Laurent type SOPs \cite{miki20}, by which the Christoffel transformation is reduced to a three term recurrence relation and the corresponding integrable hierarchy is reduced to the 1d-Toda hierarchy with wave function expressed as SOPs.

In Section \ref{psop_ct}, some considerations are taken into the Christoffel transformation of PSOPs. The most general case is firstly given and then some reductional cases are considered. We demonstrate three different examples to show how to make use of moment constraint approach to obtain lower dimensional integrable lattices, generalising the moment constraint approach proposed in \cite{li19}. Concluding remarks are given in Section \ref{con}.

\section{Christoffel transformations of skew orthogonal polynomials and partial skew orthogonal polynomials}\label{ct}
The main purpose of this part is to derive the Christoffel transformation for the skew-orthogonal polynomials (SOPs) and partial-skew-orthogonal polynomials (PSOPs) as an analogy of the that for orthogonal polynomials.
Such transformations can be regarded as the spectral transformations, thus being prepared for the later discussion about how to connect with integrable systems after the involvement of time. To this end, we firstly need to give introductions to SOPs and PSOPs. Following \cite{chang182}, we start with a skew symmetric inner product, and then give some brief derivations about SOPs and PSOPs from a unified framework.

\subsection{Skew symmetric inner product, SOPs and PSOPs}
Let us consider a skew symmetric inner product $\langle\cdot,\cdot\rangle$ from $\mathbb{R}[z]\times\mathbb{R}[z]\to \mathbb{R}$ satisfying the skew symmetry property $$\langle f(z),g(z)\rangle=-\langle g(z),f(z)\rangle,$$ and define the skew symmetric bi-moments
\begin{align*}
\mu_{i,j}=\langle z^i,z^j\rangle=-\langle z^j,z^i\rangle=-\mu_{j,i}.
\end{align*}
Then we investigate the (skew-)orthogonality under the skew symmetric inner product.

For monic polynomials $\{P_n(z)\}_{n\in\mathbb{N}}$, if we consider the orthogonal conditions
\begin{align*}
\langle P_n(z),z^i\rangle=0,\quad 0\leq i\leq n-1,
\end{align*}
then as that discussed in \cite[Sec. 2]{chang182}, only the even family of polynomials are well-defined and the odd ones are not.
Therefore, how to set up the inner product condition to make the odd-order polynomials well defined is a key point at this stage. One suggestive way is to consider the conditions
\begin{align*}
&\langle P_{2n}(z), z^i\rangle=0,\qquad\qquad\quad\,\,\,0\leq i\leq 2n,\\
&\langle P_{2n+1}(z),z^i\rangle=\alpha_{2n+1,i},\qquad 0\leq i\leq 2n+1
\end{align*}
where $\{\alpha_{2n+1,i}\}_{i=0}^{2n+1}$ are $2n+2$ parameters satisfying
\begin{align}\label{det}
\det\left(\begin{array}{cccc}
\mu_{0,0}&\cdots& \mu_{2n,0}&\mu_{2n+1,0}-\alpha_{2n+1,0}\\
\mu_{0,1}&\cdots&\mu_{2n,1}&\mu_{2n+1,1}-\alpha_{2n+1,1}\\
\vdots&&\vdots&\vdots\\
\mu_{0,2n+1}&\cdots&\mu_{2n,2n+1}&\mu_{2n+1,2n+1}-\alpha_{2n+1,2n+1}
\end{array}
\right)=0.
\end{align}
By differently choosing $\{\alpha_{2n+1,i}\}_{i=0}^{2n+1}$, we get different families of odd-order polynomials.
\subsubsection{Skew-orthogonal polynomials $\{P_n(z)\}_{n\in\mathbb{N}}$}
The choices
\begin{align*}
\alpha_{2n+1,i}=-\frac{\tau_{2n+2}}{\tau_{2n}}\delta_{i,2n},\quad 0\leq i\leq 2n+1
\end{align*}
give rise to the concept of SOPs, where $\tau_{2n}=\Pf(0,\cdots,2n-1)$ and $\Pf(i,j)=\mu_{i,j}$.
Therefore, one can get SOPs $\{P_n(z)\}_{n\in\mathbb{N}}$ by requiring the skew orthogonal relations
\begin{align}\label{sop}
 \left\langle P_{2n}(z),P_{2m}(z)\right\rangle=\left\langle P_{2n+1}(z),P_{2m+1}(z)\right\rangle=0,\quad  \left\langle P_{2m}(z),P_{2n+1}(z)\right\rangle=\frac{\tau_{2n+2}}{\tau_{2n}}\delta_{m,n}.
\end{align}
The condition \eqref{sop} is indeed a consistent linear system for the coefficients of polynomials. By solving it and applying a Jacobi identity, one can find Pfaffian expressions for SOPs \cite{adler99,chang182}
\begin{align*}
P_{2n}(z)=\frac{1}{\tau_{2n}}\Pf(0,\cdots,2n,z),\quad P_{2n+1}(z)=\frac{1}{\tau_{2n}}\Pf(0,\cdots,2n-1,2n+1,z)
\end{align*}
with $\Pf(i,z)=z^i$.
\subsubsection{Partial-skew-orthogonal polynomials $\{Q_n(z)\}_{n\in\mathbb{N}}$} Except the choice demonstrated above, there is another choice to introduce $2n+2$ quantities $\{\beta_i\}_{i=0}^{2n+1}$ such that
 \begin{align*}
 \alpha_{2n+1,i}=-\beta_i\frac{\tau_{2n+2}}{\tau_{2n+1}},\quad \tau_{2n+1}=\Pf(d,0,\cdots,2n)
 \end{align*}
 with $\Pf(d,i)=\beta_i$ and $\Pf(i,j)=\mu_{i,j}$. Here the quantities $\{\beta_i\}_{i=0}^{2n+1}$ are chosen so that $\tau_{2n+1}\ne0$. Verifications of the condition \eqref{det} is based on a Jacobi identity (see \cite{chang182} for more details).
In this case, the skew orthogonal relation can be formulated as follows
\begin{align*}
\langle Q_{2n}(z), z^i\rangle=\frac{\tau_{2n+2}}{\tau_{2n}}\delta_{2n+1,i},\quad \langle Q_{2n+1}(z). z^i\rangle=-\beta_i\frac{\tau_{2n+2}}{\tau_{2n+1}},\quad {0\leq i\leq 2n+1}.
\end{align*}
Moreover, these relations admit the following Pfaffian expressions
\begin{align*}
Q_{2n}(z)=\frac{1}{\tau_{2n}}\Pf(0,\cdots,2n,z),\quad Q_{2n+1}(z)=\frac{1}{\tau_{2n+1}}\Pf(d,0,\cdots,2n+1,z)
\end{align*}
with $\Pf(d,z)=0$. It is remarkable that both even- and odd-order PSOPs are uniquely determined, although the odd ones can be arbitrarily chosen due to the freedom of $\{\beta_j\}_{j=0}^{2n+1}$.
Therefore, by assuming that there are $\ell$ different sets $\{\beta_j^{(k)}\}_{j=0}^{2n+1}$ for $k=1,\cdots,\ell$ such that for each $k$, $\tau_{2n+1,k}=\Pf(d_k,0,\cdots,2n+1)\ne0$ with $\Pf(d_k,i)=\beta_i^{(k)}$, we can define $\ell$-component PSOPs of odd order satisfying the relations
\begin{align*}
Q_{2n+1,k}(z)=\frac{1}{\tau_{2n+1,k}}\Pf(d_k,0,\cdots,2n+1,z),\quad \langle Q_{2n+1,k}(z), z^i\rangle=-\beta_i^{(k)}\frac{\tau_{2n+2}}{\tau_{2n+1,k}},\quad k=1,\cdots,\ell.
\end{align*}

\subsection{Christoffel transformations of SOPs}
It is important to develop the Christoffel transformations of orthogonal polynomials since such transformations can act as the spectral problem and characterise the property of polynomials. Some previous results about the discrete spectral transformations of SOPs are based on the evolution of the functional \cite{miki12}, but we emphasise on the adjacent families of SOPs and consider corresponding Christoffel transformations. It is remarkable that the adjacent SOPs here are the special $\mu=0$ case in \cite[Thm. 2]{miki12}, but the Christoffel transformation is only between two different families of polynomials and different from the known results.
\begin{definition}
For $m\in\mathbb{N}$, the $m$-th adjacent family of SOPs is defined by
\begin{align*}
P_{2n}^{(m)}=\frac{1}{z^m\tau_{2n}^{(m)}}\Pf(m,\cdots,m+2n,z),\quad P_{2n+1}^{(m)}=\frac{1}{z^m\tau_{2n}^{(m)}}\Pf(m,\cdots,m+2n-1,m+2n+1,z),
\end{align*}
where $\tau_{2n}^{(m)}=\Pf(m,\cdots,m+2n-1)$.
\end{definition}
One of the most significant features of the adjacent family of polynomials is that they inherit the skew-orthogonality under the modified inner product $\langle z^m\cdot,z^m\cdot\rangle$. To be precise, we have
\begin{align*}
 &\left\langle z^mP_{2n}^{(m)}(z),z^mP_{2l}^{(m)}(z)\right\rangle=\left\langle z^mP_{2n+1}^{(m)}(z),z^mP_{2l+1}^{(m)}(z)\right\rangle=0,\\
 & \left\langle z^mP^{(m)}_{2l}(z),z^mP^{(m)}_{2n+1}(z)\right\rangle=\frac{\tau^{(m)}_{2n+2}}{\tau^{(m)}_{2n}}\delta_{l,m}.
\end{align*}
The existence and uniqueness of the adjacent family of SOPs are equivalent to the condition $\tau_{2n}^{(m)}\ne0$, and there are many physically interesting examples such as partition functions of orthogonal/symplectic ensembles satisfying such a condition.
\begin{proposition}\label{SOP-ct}
The Christoffel transforms for SOPs has the form
\begin{subequations}
\begin{align}
&P_{2n+1}^{(m)}(z)-\mathcal{A}^m_n P_{2n}^{(m)}(z)=z\left(
P_{2n}^{(m+1)}(z)-\mathcal{B}^m_n P_{2n-2}^{(m+1)}(z)
\right),\label{sop-ct1}\\
&P_{2n+2}^{(m)}(z)-\mathcal{C}^m_n P_{2n}^{(m)}(z)=z\left(
P_{2n+1}^{(m+1)}(z)-\mathcal{D}^m_nP_{2n}^{(m+1)}(z)\right),\label{sop-ct2}
\end{align}
\end{subequations}
with coefficients
\begin{align*}
\mathcal{A}^m_n=\frac{P_{2n+1}^{(m)}(0)}{P_{2n}^{(m)}(0)},\quad
\mathcal{B}^m_n=\frac{\tau_{2n+2}^{(m)}\tau_{2n-2}^{(m+1)}}{\tau_{2n}^{(m)}\tau_{2n}^{(m+1)}},\quad
\mathcal{C}^m_n=\frac{\tau_{2n}^{(m)}\tau_{2n+2}^{(m+1)}}{\tau_{2n+2}^{(m)}\tau_{2n}^{(m+1)}},\quad
\mathcal{D}^m_n=\frac{P_{2n+3}^{(m-1)}(0)}{P_{2n+2}^{(m-1)}(0)}.
\end{align*}
\end{proposition}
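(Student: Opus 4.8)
The plan is to prove each relation by showing that both sides are monic polynomials of the same degree that agree coefficient-by-coefficient, working with the skew-orthogonality of the adjacent families rather than manipulating Pfaffian identities by hand. Throughout I abbreviate the modified inner product by $\langle f,g\rangle_m:=\langle z^mf,z^mg\rangle$, whose skew-bimoments are $\langle z^i,z^j\rangle_m=\mu_{m+i,m+j}$. First I would record two elementary evaluations obtained by expanding the defining Pfaffians along the $z$-column, where $\Pf(i,z)=z^i$. Reading off the lowest power gives $P_{2k}^{(m)}(0)=\tau_{2k}^{(m+1)}/\tau_{2k}^{(m)}$, whence $\mathcal C_n^m=P_{2n+2}^{(m)}(0)/P_{2n}^{(m)}(0)$; together with the definition of $\mathcal A_n^m$ this shows that the left-hand sides of \eqref{sop-ct1} and \eqref{sop-ct2} have vanishing constant term, so each is divisible by $z$ and it suffices to identify $z^{-1}(\mathrm{LHS})$ with the bracket on the right. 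These are monic of degree $2n$ and $2n+1$ respectively, matching the right-hand brackets.

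Next I would expand $z^{-1}(\mathrm{LHS})=\sum_{j}c_jP_j^{(m+1)}$ in the $(m{+}1)$-th adjacent basis and extract the $c_j$ by pairing against the skew-dual partners prescribed by \eqref{sop}. The key device is that $\langle f,g\rangle_{m+1}=\langle zf,zg\rangle_m$, so pairing $z^{-1}(\mathrm{LHS})$ against $P_j^{(m+1)}$ reinstates the factor $z$ and returns $\langle (\text{the original numerator}),\,zP_j^{(m+1)}\rangle_m$. Because $P_{2n}^{(m)}$ is skew-orthogonal to every monomial of degree $\le 2n$, $P_{2n+1}^{(m)}$ to every monomial of degree $\le 2n-1$, and $P_{2n+2}^{(m)}$ to every monomial of degree $\le 2n+2$, the pairing with $zP_j^{(m+1)}$ vanishes once $\deg(zP_j^{(m+1)})$ is small enough, killing $c_j$ for all $j$ below the top two indices; the top coefficient is $1$ by monicity.

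It then remains to pin down the single surviving coefficient. For \eqref{sop-ct1} this is the coefficient of $P_{2n-2}^{(m+1)}$, equal to $\langle P_{2n+1}^{(m)},z^{2n}\rangle_m\big/\langle P_{2n-2}^{(m+1)},P_{2n-1}^{(m+1)}\rangle_{m+1}$; since $\langle P_{2n+1}^{(m)},z^{2n}\rangle_m=-\tau_{2n+2}^{(m)}/\tau_{2n}^{(m)}$ and $\langle P_{2n-2}^{(m+1)},P_{2n-1}^{(m+1)}\rangle_{m+1}=\tau_{2n}^{(m+1)}/\tau_{2n-2}^{(m+1)}$, this is exactly $-\mathcal B_n^m$. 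For \eqref{sop-ct2} the surviving term is the coefficient of $P_{2n}^{(m+1)}$, and here two structural facts do the work: the odd polynomial $P_{2n+1}^{(m+1)}$ has no $z^{2n}$ term, because its defining Pfaffian omits the index $m+2n+1$, so that $zP_{2n+1}^{(m+1)}=z^{2n+2}+(\text{degree}\le 2n)$; and the bordered moment $\langle P_{2n}^{(m)},z^{2n+2}\rangle_m=\Pf(m,\dots,m+2n,m+2n+2)/\tau_{2n}^{(m)}$, obtained by feeding $z^{m+2n+2}$ into the $z$-slot via $\langle z^i,z^{m+2n+2}\rangle=\mu_{i,m+2n+2}=\Pf(i,m+2n+2)$. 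Combining these with $\mathcal C_n^m$ yields the coefficient $-\Pf(m,\dots,m+2n,m+2n+2)/\tau_{2n+2}^{(m)}$, which the value-at-zero evaluation identifies with $-P_{2n+3}^{(m-1)}(0)/P_{2n+2}^{(m-1)}(0)=-\mathcal D_n^m$.

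The hard part will be the Pfaffian bookkeeping in this last step: tracking the permutation signs in the $z$-column expansions so that the value-at-zero and bordered-moment evaluations carry the correct sign, and recognizing the resulting quantity in the level-$(m{-}1)$ form stated for $\mathcal D_n^m$. A secondary subtlety is the degree overflow when extracting the top coefficient in \eqref{sop-ct2}, where $zP_{2n+1}^{(m+1)}$ has degree $2n+2$ and one must invoke the vanishing $z^{2n}$ term to discard the would-be $z^{2n+1}$ contribution. The same computations could instead be packaged as direct applications of the bilinear Pfaffian identity, in parallel with the determinantal derivation recalled for orthogonal polynomials in the introduction.
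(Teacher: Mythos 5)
Your argument is correct, but it proceeds by a genuinely different route from the paper. The paper's proof is a direct application of two bilinear Pfaffian (Pl\"ucker-type) identities — one with $\{\ast\}=\{m+1,\dots,2n+m-1\}$ for \eqref{sop-ct1} and one with $\{\ast\}=\{m+1,\dots,2n+m\}$ for \eqref{sop-ct2} — combined with the evaluations \eqref{co}; each identity, read term by term, \emph{is} the claimed relation. You instead verify that each left-hand side vanishes at $z=0$ (which is exactly what the definitions of $\mathcal A_n^m$ and $\mathcal C_n^m$ encode, via \eqref{co}), divide by $z$, expand in the $(m{+}1)$-adjacent basis, and extract coefficients by pairing with the skew-dual partners from \eqref{sop}; your identification of the two surviving coefficients with $-\mathcal B_n^m$ and $-\mathcal D_n^m$ checks out, including the sign $\langle P_{2n+1}^{(m)},z^{2n}\rangle_m=-\tau_{2n+2}^{(m)}/\tau_{2n}^{(m)}$ and the rewriting of $\Pf(m,\dots,m+2n,m+2n+2)/\tau_{2n+2}^{(m)}$ as $P_{2n+3}^{(m-1)}(0)/P_{2n+2}^{(m-1)}(0)$. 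You also correctly isolate the one genuinely delicate point in \eqref{sop-ct2}: since $\langle P_{2n}^{(m)},z^{2n+1}\rangle_m\neq0$ in general, the argument needs the absence of the $z^{2n}$ term in $P_{2n+1}^{(m+1)}$ to discard that contribution. What each approach buys: the paper's is a two-line verification once the right Pfaffian identity is chosen, and it produces the coefficients without any expansion; yours is more systematic and explains \emph{why} only two terms survive on the right (degree counting plus skew-orthogonality), at the cost of invoking the adjacent family as a basis and doing the dual-pairing bookkeeping. Both are legitimate proofs of Proposition \ref{SOP-ct}.
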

\begin{proof}
Starting from the Pfaffian identity
\begin{align*}
&\Pf(m,\ast)\Pf(\ast,2n+m,2n+m+1,z)=\Pf(\ast,2n+m)\Pf(m,\ast,2n+m+1,z)\\
&\qquad\qquad-\Pf(\ast,2n+m+1)\Pf(m,\ast,2n+m,z)+\Pf(\ast,z)\Pf(m,\ast,2n+m,2n+m+1)
\end{align*}
with $\{\ast\}=\{m+1,\cdots,2n+m-1\}$
and recognising the fact that
\begin{align}\label{co}
P_{2n}^{(m)}(0)=\frac{\tau_{2n}^{(m+1)}}{\tau_{2n}^{(m)}},\quad P_{2n+1}^{(m)}(0)=\frac{1}{\tau_{2n}^{(m)}}\Pf(m+1,\cdots,2n+m-1,2n+m+1)
\end{align}
we  get the identity \eqref{sop-ct1}.

The identity \eqref{sop-ct2} can be obtained from the Pfaffian identity
\begin{align*}
&\Pf(m,\ast,2n+m+1,2n+m+2,z)\Pf(\ast)=\Pf(m,\ast,2n+m+1)\Pf(\ast,2n+m+2,z)\\
&\qquad-\Pf(m,\ast,2n+m+2)\Pf(\ast,2n+m+1,z)+\Pf(m,\ast,z)\Pf(\ast,2n+m+1,2n+m+2)
\end{align*}
with $\{\ast\}=\{m+1,\cdots,2n+m\}$, and the term $\Pf(m,\cdots,2n+m,2n+m+2)$ can be written in terms of SOPs with the help of \eqref{co}.
\end{proof}

\subsection{Christoffel transformations of PSOPs}
Similar to the adjacent family of SOPs, we now consider the adjacent family of PSOPs. Here we focus on the multi-component case for odd-order polynomials since the one-component $m=1$ case was implicitly given in \cite[Sec. 3.1]{chang182}.
\begin{definition}
The $m$-th adjacent family of PSOPs are defined by
\begin{align*}
Q_{2n}^{(m)}(z)=\frac{1}{z^m \tau_{2n}^{(m)}}\Pf(m,\cdots,m+2n,z),\quad Q_{2n+1,k}^{(m)}(z)=\frac{1}{z^{m}\tau_{2n+1,k}^{(m)}}\Pf(d_k,m,\cdots,m+2n+1,z),
\end{align*}
where $\tau_{2n}^{(m)}=\Pf(m,\cdots,m+2n-1)$ and $\tau_{2n+1,k}^{(m)}=\Pf(d_k,m,\cdots,m+2n)$.
\end{definition}

The skew inner product properties of the adjacent families are easily obtained as follows
\begin{align*}
\langle z^mQ_{2n}^{(m)}(z), z^{m+i}\rangle=\frac{\tau_{2n+2}^{(m)}}{\tau_{2n}^{(m)}}\delta_{2n,i-1},\quad \langle z^m Q_{2n+1,k}^{(m)}(z),z^{m+i}\rangle=-\beta^{(k)}_{m+i}\frac{\tau_{2n+2}^{(m)}}{\tau_{2n+1,k}^{(m)}},
\end{align*}
and the existence and uniqueness of the adjacent PSOPs are equivalent to the facts that $\tau_{2n}^{(m)}\ne0$ and $\tau_{2n+1,k}^{(m)}\ne0$.

\begin{theorem}
The Christoffel transformation for one-component PSOPs is given by
\begin{align}\label{psop-ct}
& Q^{(m)}_{n+1}(z)+\mathcal{\xi}^{m}_nQ^{(m)}_n(z)=z\left( Q_n^{(m+1)}(z)+\mathcal{\eta}^m_nQ^{(m+1)}_{n-1}(z) \right)
\end{align}
with coefficients
\begin{align*}
\xi^m_n=\frac{\tau_{n}^{(m)}\tau_{n+1}^{(m+1)}}{\tau_{n+1}^{(m)}\tau_n^{(m+1)}},\quad
\eta_n^m=\frac{\tau_{n+2}^{(m)}\tau_{n-1}^{(m+1)}}{\tau_{n+1}^{(m)}\tau_{n}^{(m+1)}}.
\end{align*}
\end{theorem}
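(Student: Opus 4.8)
The plan is to reduce the recurrence \eqref{psop-ct} to a purely algebraic Pfaffian identity and then to produce that identity from a Pfaffian Plücker (Jacobi-type) relation, exactly in the spirit of the proof of Proposition \ref{SOP-ct}. First I would multiply \eqref{psop-ct} by $z^m$; since $z\,Q_n^{(m+1)}$ and $z\,Q_{n-1}^{(m+1)}$ also carry the prefactor $z^{-m}$, every power of $z$ outside the Pfaffians cancels, and after inserting the tau-function formulas for $\xi_n^m$ and $\eta_n^m$ and clearing the common denominator the claim becomes an equality between two sums of products of two Pfaffians. Because even- and odd-order PSOPs are represented differently---the odd ones carry the auxiliary index $d$, the even ones do not---this reduced identity takes two shapes, so I would split the argument according to the parity of $n$.

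For $n=2l$, multiplying through by $\tau_{2l+1}^{(m)}\tau_{2l}^{(m+1)}$ turns the claim into
\begin{align*}
&\tau_{2l}^{(m+1)}\Pf(d,m,\dots,m+2l+1,z)+\tau_{2l+1}^{(m+1)}\Pf(m,\dots,m+2l,z)\\
&\qquad=\tau_{2l+1}^{(m)}\Pf(m+1,\dots,m+2l+1,z)+\tau_{2l+2}^{(m)}\Pf(d,m+1,\dots,m+2l,z).
\end{align*}
Writing each $\tau$ as its defining Pfaffian, this is exactly the Plücker relation
\begin{align*}
&\Pf(B)\Pf(B,a_1,a_2,a_3,a_4)=\Pf(B,a_1,a_2)\Pf(B,a_3,a_4)\\
&\qquad-\Pf(B,a_1,a_3)\Pf(B,a_2,a_4)+\Pf(B,a_1,a_4)\Pf(B,a_2,a_3)
\end{align*}
applied with block $B=\{m+1,\dots,m+2l\}$ and extra indices $(a_1,a_2,a_3,a_4)=(d,m,m+2l+1,z)$. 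Reading off the six sub-Pfaffians, and using $\Pf(i,z)=z^i$ together with $\Pf(d,z)=0$, reproduces the displayed equality; dividing back by $z^m\tau_{2l+1}^{(m)}$ returns \eqref{psop-ct} with precisely the coefficients $\xi_{2l}^m$ and $\eta_{2l}^m$.

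For $n=2l+1$ the same reduction produces an identity in which every product pairs a Pfaffian of size $2l+2$ with one of size $2l+4$; this is no longer of the form $\Pf(B)\Pf(B,\cdots)$ but rather the bordered identity used in the proof of Proposition \ref{SOP-ct}, now with block $B=\{m+1,\dots,m+2l+1\}$, pivot $a_0=d$ and extra indices $(a_1,a_2,a_3)=(m,m+2l+2,z)$. The identifications $\Pf(d,B)=\tau_{2l+1}^{(m+1)}$, $\Pf(m,B)=\tau_{2l+2}^{(m)}$, $\Pf(B,m+2l+2)=\tau_{2l+2}^{(m+1)}$ and $\Pf(d,B,m,m+2l+2)=\tau_{2l+3}^{(m)}$ convert it into \eqref{psop-ct} carrying the coefficients $\xi_{2l+1}^m$ and $\eta_{2l+1}^m$.

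The delicate step, which I expect to be the main obstacle, is the sign bookkeeping. In each sub-Pfaffian above the indices $d$ and $z$ stand out of canonical increasing order, and moving them into place produces signs from transpositions that must be tracked so that the three Plücker terms assemble with exactly the right signs---that is, so that \eqref{psop-ct} appears with $+\xi_n^m$ and $+\eta_n^m$ and with the stated positive tau-ratios rather than with spurious minus signs. Throughout one invokes the standing nondegeneracy $\tau_n^{(m)}\ne0$ (and $\tau_{2n+1}^{(m)}\ne0$ in the one-component reduction), which legitimises the divisions defining $\xi_n^m$ and $\eta_n^m$.
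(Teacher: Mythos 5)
Your proposal is correct and follows essentially the same route as the paper: split by the parity of $n$, apply the even-block Pfaffian Pl\"ucker identity with $B=\{m+1,\dots,m+2l\}$ and bordering indices $(d,m,m+2l+1,z)$, and the odd-block (bordered) identity with $B=\{m+1,\dots,m+2l+1\}$ and $(d,m,m+2l+2,z)$, then identify the sub-Pfaffians with $\tau^{(m)}_\bullet$, $\tau^{(m+1)}_\bullet$ and the adjacent PSOPs --- exactly the two identities and index choices used in the paper's proof. The only slip is cosmetic: in the even case you should divide back by $z^m\tau_{2l+1}^{(m)}\tau_{2l}^{(m+1)}$ (not just $z^m\tau_{2l+1}^{(m)}$) to recover \eqref{psop-ct}.
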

\begin{proof}
In fact, identity \eqref{psop-ct} is composed of two different situations. When $n$ is even, we take the index set $\{\ast\}=\{m+1,\cdots,2n+m\}$ and
make use of the Pfaffian identity
\begin{align*}
&\Pf(d,m,\ast,2n+m+1,z)\Pf(\ast)=\Pf(d,m,\ast)\Pf(\ast,2n+m+1,z)\\
&\qquad-\Pf(d,\ast,2n+m+1)\Pf(m,\ast,z)+\Pf(d,\ast,z)\Pf(m,\ast,2n+m+1),
\end{align*}
then we  get
\begin{align*}
Q_{2n+1}^{(m)}(z)+\frac{\tau_{2n}^{(m)}\tau_{2n+1}^{(m+1)}}{\tau_{2n+1}^{(m)}\tau_{2n}^{(m+1)}}Q_{2n}^{(m)}(z)=z
\left(
Q_{2n}^{(m+1)}(z)+\frac{\tau_{2n+2}^{(m)}\tau_{2n-1}^{(m+1)}}{\tau_{2n+1}^{(m)}\tau_{2n}^{(m+1)}}Q_{2n-1}^{(m+1)}(z)
\right).
\end{align*}
If $n$ is odd, we need to shift the index $\{\ast\}=\{m+1,\cdots,m+2n+1\}$, and use the Pfaffian identity
\begin{align*}
&\Pf(d,m,\ast,m+2n+2)\Pf(\ast,z)=\Pf(d,\ast)\Pf(m,\ast,m+2n+2,z)\\
&\qquad-\Pf(m,\ast)\Pf(d,\ast,m+2n+2,z)+\Pf(\ast,m+2n+2)\Pf(d,m,\ast,z)
\end{align*}
to  get
\begin{align*}
Q_{2n+2}^{(m)}(z)+\frac{\tau_{2n+1}^{(m)}\tau_{2n+2}^{(m+1)}}{\tau_{2n+2}^{(m)}\tau_{2n+1}^{(m+1)}}Q_{2n+1}^{(m)}(z)=z\left(
Q_{2n+1}^{(m+1)}(z)+\frac{\tau_{2n+3}^{(m)}\tau_{2n}^{(m+1)}}{\tau_{2n+2}^{(m)}\tau_{2n+1}^{(m+1)}}Q_{2n}^{(m+1)}(z)
\right).
\end{align*}
Combining these results we obtain \eqref{psop-ct}.
\end{proof}

The procedure stated above implies that the Christoffel transformation for PSOPs can be extended to the multi-component case, i.e
\begin{subequations}
\begin{align}
Q_{2n+1,k}^{(m)}(z)+\mathcal{E}_{n,k}^m Q_{2n}^{(m)}(z)&=z\left(
Q_{2n}^{(m+1)}(z)+\mathcal{F}_{n,k}^m Q_{2n-1,k}^{(m+1)}(z)
\right),\label{psop-ct1}\\
Q_{2n+2}^{(m)}(z)+\mathcal{G}_{n,k}^{m} Q_{2n+1,k}^{(m)}(z)&=z\left(
Q_{2n+1,k}^{(m+1)}(z)+\mathcal{H}_{n,k}^m Q_{2n}^{(m+1)}(z)
\right),\label{psop-ct2}
\end{align}
\end{subequations}
where the coefficients $\mathcal{E}_{n,k}^m$, $\mathcal{F}_{n,k}^m$, $\mathcal{G}_{n,k}^m$, $\mathcal{H}_{n,k}^m$ are given by
\begin{align*}
\mathcal{E}_{n,k}^m=\frac{\tau_{2n}^{(m)}\tau_{2n+1,k}^{(m+1)}}{\tau_{2n+1,k}^{(m)}\tau_{2n}^{(m+1)}},\quad \mathcal{F}_{n,k}^m=\frac{\tau_{2n+2}^{(m)}\tau_{2n-1,k}^{(m+1)}}{\tau_{2n+1,k}^{(m)}\tau_{2n}^{(m+1)}},\quad \mathcal{G}_{n,k}^m=\frac{\tau_{2n+1,k}^{(m)}\tau_{2n+2}^{(m+1)}}{\tau_{2n+2}^{(m)}\tau_{2n+1,k}^{(m+1)}},\quad \mathcal{H}_{n,k}^m=\frac{\tau_{2n+3,k}^{(m)}\tau_{2n}^{(m+1)}}{\tau_{2n+2}^{(m)}\tau_{2n+1,k}^{(m+1)}}.
\end{align*}

\section{Applications of SOPs' Christoffel transformation}\label{sop_ct}

In this part, we introduce the commuting time flows, and make use of the SOPs' Christoffel transformation to obtain integrable lattices. The concept of commuting flows were proposed in \cite{adler99} by considering the evolutions of moment matrices $\mathcal{U}:=(\mu_{i,j})_{i,j\in\mathbb{N}}$ such that $\p_{t_n}\mathcal{U}=\Lambda^n \mathcal{U}+\mathcal{U}\Lambda^{\top n}$, where $\Lambda$ is the shift operator whose off-diagonals are $1$ and the others are $0$. Such evolutions hold valid for each bi-moment, so we have
\begin{align}\label{te}
\p_{t_n}\mu_{i,j}=\mu_{i+n,j}+\mu_{i,j+n}.
\end{align}
One of the most important property under the commuting flow is to find explicitly  the derivative relationship between $P^{(m)}_{2n}(z)$ and $P_{2n+1}^{(m)}(z)$.
\begin{proposition}\label{prop1}
With time evolution \eqref{te}, it holds that
\begin{align}\label{derivative}
(z+\p_{t_1})(\tau_{2n}^{(m)}P_{2n}^{(m)})=\tau_{2n}^{(m)}P_{2n+1}^{(m)}(z).
\end{align}
\end{proposition}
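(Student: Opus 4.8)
The plan is to work directly with the Pfaffian expressions $\tau_{2n}^{(m)}P_{2n}^{(m)}(z)=z^{-m}\Pf(m,\cdots,m+2n,z)$ and $\tau_{2n}^{(m)}P_{2n+1}^{(m)}(z)=z^{-m}\Pf(m,\cdots,m+2n-1,m+2n+1,z)$, and to reduce the claim to the equivalent statement
\[
(\partial_{t_1}+z)\Pf(m,\cdots,m+2n,z)=\Pf(m,\cdots,m+2n-1,m+2n+1,z).
\]
Indeed, since $z^m$ is $t_1$-independent it commutes both with $\partial_{t_1}$ and with multiplication by $z$, so dividing this identity by $z^m$ yields \eqref{derivative} at once. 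First I would differentiate the Pfaffian entry-by-entry using $\partial_{t_1}\mu_{i,j}=\mu_{i+1,j}+\mu_{i,j+1}$, noting that the entries $\Pf(i,z)=z^i$ occupying the $z$-labelled row and column are $t_1$-independent and hence contribute nothing.

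Grouping the resulting terms according to which numbered index $m+k$ is raised, and using antisymmetry of the moments together with the cofactor (Laplace-type) expansion of a Pfaffian, each group is recognised as the expansion, along the index $m+k$, of the Pfaffian $\Pf(m,\cdots,m+k+1,\cdots,m+2n,z)$ with $m+k$ replaced by $m+k+1$, \emph{except} that the single summand pairing the raised index with the $z$-label is absent (precisely because $\partial_{t_1}z^{m+k}=0$). Thus $\partial_{t_1}\Pf(m,\cdots,m+2n,z)$ equals $\sum_{k=0}^{2n}\Pf(m,\cdots,m+k+1,\cdots,m+2n,z)$ minus the sum of the missing $z$-terms.

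The first sum collapses: for $k<2n$ the replacement produces a Pfaffian with two identical indices $m+k+1$, whose antisymmetric matrix has two equal rows and therefore vanishes; only $k=2n$ survives, giving exactly the target $\Pf(m,\cdots,m+2n-1,m+2n+1,z)$. For the second sum, the missing term attached to the raised index $m+k$ is $\pm\,z^{m+k+1}\Pf(m,\cdots,\widehat{m+k},\cdots,m+2n)$; because the cofactor sign of the pair $(m+k,z)$ is unaffected by the raising, these signs coincide with those in the cofactor expansion of $\Pf(m,\cdots,m+2n,z)$ along the $z$-label. Hence the second sum equals $z\sum_k \pm\,z^{m+k}\Pf(m,\cdots,\widehat{m+k},\cdots,m+2n)=z\,\Pf(m,\cdots,m+2n,z)$, and combining the two sums gives the displayed identity.

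The main obstacle I expect is the sign bookkeeping: one must verify that the cofactor sign of the pair $(m+k,z)$ appearing in the raised expansion is literally the same as in the unraised $z$-expansion, so that the defect terms reassemble into $+z\,\Pf(m,\cdots,m+2n,z)$ rather than a signed variant. A clean consistency check is the base case $n=0$, where $\tau_0^{(m)}P_0^{(m)}=1$ and $\tau_0^{(m)}P_1^{(m)}=z$: here $\partial_{t_1}(1)=0$ while $\Pf(m+1,z)=z^{m+1}=z\,\Pf(m,z)$, confirming that the multiplication-by-$z$ term supplies exactly the contribution needed to offset the absent $z$-derivatives.
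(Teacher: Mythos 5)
Your argument is correct and is essentially the computation the paper performs: both proofs expand $\Pf(m,\cdots,m+2n,z)$, apply the index-raising rule $\p_{t_1}\mu_{i,j}=\mu_{i+1,j}+\mu_{i,j+1}$ so that all raised Pfaffians with a repeated index vanish except the one producing $\Pf(m,\cdots,m+2n-1,m+2n+1,z)$, and identify the leftover terms coming from the static $z$-column with $z\,\Pf(m,\cdots,m+2n,z)$. The only difference is bookkeeping --- the paper first expands along the $z$-label and then differentiates the coefficient Pfaffians, whereas you differentiate the full Pfaffian and track the missing $(m+k,z)$ cofactor terms --- and your sign analysis of the pair cofactor is the correct justification of the cancellation the paper states tersely.
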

\begin{proof}
Noting that
\begin{align*}
(z+\partial_{t_1})(\tau_{2n}^{(m)}P_{2n}^{(m)})
=(z+\partial_{t_1})z^{-m}\Pf(m,m+1,\cdots,m+2n,z),
\end{align*}
and expanding the right hand side in terms of $z$, one can find
\begin{align*}
&z^{2n+1}\Pf(m,\cdots,m+2n-1)-\sum_{k=0}^{2n-1}(-z)^{k+1}\Pf(m,\cdots,\widehat{m+k},\cdots,m+2n)\\
&\quad+\sum_{k=0}^{2n}(-z)^k
\left[\Pf(m,\cdots,\widehat{m+k-1},\cdots,m+2n)+\Pf(m,\cdots,\widehat{m+k},\cdots,\widehat{m+2n},m+2n+1)\right].
\end{align*}
Eliminating the last term in the first line and the first term in the last implies  \eqref{derivative}.
\end{proof}
The proof of the case $m=0$ case was given in \cite[Lemma 3.6]{adler99}. However, Proposition \ref{prop1} shows that this property  holds for all adjacent families of SOPs. Moreover,  the coefficients of SOPs can be written in terms of Schur polynomials acting on the normalisation factor (i.e. tau function) \cite[Sec. 3]{adler02}
\begin{align}\label{schur}
P_{2n}^{(m)}(z)=\frac{1}{\tau_{2n}^{(m)}}\sum_{k=0}^{2n}z^{2n-k}s_k(-\tilde{\p}_t)\tau_{2n}^{(m)},
\end{align}
where $\{s_k(t)\}_{k\in\mathbb{N}}$ are the Schur polynomials given by
\begin{align*}
\exp\left(
\sum_{\ell=1}^\infty t_\ell z^\ell
\right)=\sum_{k=0}^\infty s_k(t)z^k,
\end{align*}
and $\tilde{\partial}_t=\left(\p_{t_1},\p_{t_2}/2,\p_{t_3}/3,\cdots\right)$. Substituting expressions   \eqref{derivative} and \eqref{schur} into \eqref{sop-ct1}-\eqref{sop-ct2} and recognising
\begin{align*}
\mathcal{A}_n^m=\p_{t_1}\log\tau_{2n}^{(m+1)},\quad
\mathcal{B}^m_n=\frac{\tau_{2n+2}^{(m)}\tau_{2n-2}^{(m+1)}}{\tau_{2n}^{(m)}\tau_{2n}^{(m+1)}},\quad
\mathcal{C}^m_n=\frac{\tau_{2n}^{(m)}\tau_{2n+2}^{(m+1)}}{\tau_{2n+2}^{(m)}\tau_{2n}^{(m+1)}},\quad
 \mathcal{D}_n^m=\p_{t_1}\log\tau_{2n+2}^{(m)},
\end{align*}
by comparing with the coefficients of monomials, one can immediately get the following bilinear identities
\begin{align}
\begin{aligned}\label{dkp}
\tau_{2n}^{(m+1)}&s_{2n+1-\ell}(-\tilde{\p}_t)\tau_{2n}^{(m)}+\tau_{2n}^{(m+1)}\p_{t_1}s_{2n-\ell}(-\tilde{\p}_t)\tau_{2n}^{(m)}-\p_{t_1}\tau_{2n}^{(m+1)}s_{2n-\ell}(-\tilde{\p}_t)\tau_{2n}^{(m)}\\
&\qquad=\tau_{2n}^{(m)}s_{2n+1-\ell}(-\tilde{\p}_t)\tau_{2n}^{(m+1)}-\tau_{2n+2}^{(m)}s_{2n-1-\ell}(-\tilde{\p}_t)\tau_{2n-2}^{(m+1)}.
\end{aligned}
\end{align}
The first nontrivial case is the case of $\ell=2n-1$. In this case, \eqref{dkp} has the form\footnote{The operator $D_t$ is usually called as the Hirota's bilinear operator, defined by \begin{align*}
D_t f\cdot g=
\frac{\p}{\p s}f(t+s)g(t-s)|_{s=0}.
\end{align*}}
\begin{align*}
(D_{t_2}+D_{t_1}^2)\tau_{2n}^{(m+1)}\cdot\tau_{2n}^{(m)}=2\tau_{2n+2}^{(m)}\tau_{2n-2}^{(m+1)}.
\end{align*}
Regarding the classification results of Kyoto School, this is exactly the KP equation of $D_\infty$ type \cite{jimbo83}, and later   recognised as Pfaff lattice hierarchy \cite{adler99,adler022,adler02}.
\begin{remark}
A natural integrable discretisation of the $t_1$-flow has been considered in \cite{miki12}
\begin{align*}
\mu_{i,j}^{\ell+1}=\mu_{i+1,j+1}^\ell+\lambda\mu_{i,j+1}^\ell+\lambda\mu_{i+1,j}^\ell+\lambda^2\mu_{i,j}^\ell,
\end{align*}
where $\ell$ is a discrete index. Combining the Christoffel transformation \eqref{sop-ct1}-\eqref{sop-ct2} and the discrete evolution gives rise to the fully discrete DKP equation.
\end{remark}

\subsection{Geronimus transformation, Laurent type SOPs, and Toda lattice}\label{stoda}
Despite the above discussed Christoffel transformation, Geronimus transformation is another important discrete transformation in the orthogonal polynomials theory. The aim of the Geronimus transformation is to express the adjacent family of polynomials in terms of the original ones, namely
\begin{align}\label{ger1}
P_{n}^{(m)}(z)=\sum_{i=0}^{n}\alpha_i^{(m)} P_i^{(m+1)}(z).
\end{align}
Since these polynomials are monic. we naturally have $\alpha_n^{(m)}=1$.
Usually, the essential idea to find the Geronimus transformation is to consider the relation $\langle zP_n^{(m)}(z), z^i\rangle$ by utilising orthogonality. However, we can not get enough information to express the Geronimus transformation when  the inner product is skew symmetric only. A possible case in which we can find the Geronimus transformation is the Laurent type SOPs proposed recently in \cite{miki20}. It requires the moments
\begin{align}\label{lsop}
\mu_{i,j}=\mu_{i-1,j-1}\quad \text{or \quad  $\mathcal{U}=\Lambda U\Lambda^\top$}.
\end{align}
This condition is equivalent to the identity $\langle z^i,z^j\rangle=\langle z^{i-1},z^{j-1}\rangle$,
and thus one can prove the following proposition.
\begin{proposition}
Under the assumption \eqref{lsop}, the following holds
\begin{align}\label{ad-lsop}
P_n^{(m)}(z)=P_n^{(m+1)}(z).
\end{align}
\end{proposition}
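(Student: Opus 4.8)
The plan is to exploit the single structural consequence of \eqref{lsop}, namely that the moments are invariant under a simultaneous downward shift of both indices, $\mu_{i,j}=\mu_{i-1,j-1}$. First I would record what this says at the level of Pfaffians built only from moment entries. If $\{i_1,\dots,i_{2k}\}$ is any list of integer indices, then shifting every index up by one and applying $\mu_{p+1,q+1}=\mu_{p,q}$ entry by entry gives $\Pf(i_1+1,\dots,i_{2k}+1)=\Pf(i_1,\dots,i_{2k})$. In particular $\tau_{2n}^{(m+1)}=\Pf(m+1,\dots,m+2n)=\Pf(m,\dots,m+2n-1)=\tau_{2n}^{(m)}$, so every normalisation factor appearing in $P_n^{(m)}$ and $P_n^{(m+1)}$ already coincides. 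This reduces \eqref{ad-lsop} to a relation between the two numerator Pfaffians.

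Next I would treat the $z$-dependent Pfaffians by expanding along the row/column carrying $z$. In the even case the expansion gives $\Pf(m,\dots,m+2n,z)=\sum_{k=0}^{2n}(-1)^{k}z^{m+k}\Pf(m,\dots,\widehat{m+k},\dots,m+2n)$ and, with the same expansion on the shifted list, $\Pf(m+1,\dots,m+2n+1,z)=\sum_{k=0}^{2n}(-1)^{k}z^{m+1+k}\Pf(m+1,\dots,\widehat{m+1+k},\dots,m+2n+1)$. Each surviving sub-Pfaffian is moment-only, and the index set $\{m+1,\dots,m+2n+1\}\setminus\{m+1+k\}$ is exactly the upward shift of $\{m,\dots,m+2n\}\setminus\{m+k\}$; by the shift-invariance of the first step the two sub-Pfaffians agree term by term, while the prefactors differ by the single factor $z^{m+1+k}=z\cdot z^{m+k}$. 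Hence $\Pf(m+1,\dots,m+2n+1,z)=z\,\Pf(m,\dots,m+2n,z)$, and therefore
\begin{align*}
P_{2n}^{(m+1)}(z)=\frac{1}{z^{m+1}\tau_{2n}^{(m+1)}}\Pf(m+1,\dots,m+2n+1,z)=\frac{1}{z^{m}\tau_{2n}^{(m)}}\Pf(m,\dots,m+2n,z)=P_{2n}^{(m)}(z).
\end{align*}

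The odd case is identical in structure: the numerator of $P_{2n+1}^{(m)}$ is $\Pf(m,\dots,m+2n-1,m+2n+1,z)$, whose integer index set shifts upward to $\{m+1,\dots,m+2n,m+2n+2\}$, precisely the index set of the numerator of $P_{2n+1}^{(m+1)}$; expanding along $z$ as before yields $\Pf(m+1,\dots,m+2n,m+2n+2,z)=z\,\Pf(m,\dots,m+2n-1,m+2n+1,z)$, and dividing by the equal tau-functions gives $P_{2n+1}^{(m)}=P_{2n+1}^{(m+1)}$. The only delicate point is bookkeeping the Pfaffian-expansion signs and verifying that the gap in the odd index list is preserved by the shift; but the shift is a bijection of ordered index lists that commutes with deletion of a single entry, so the $k$-th signs on the two sides agree automatically and no genuine obstacle arises. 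The whole argument rests on the \emph{translation invariance} of the moment matrix, which is the entire content of assumption \eqref{lsop}.
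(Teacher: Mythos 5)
Your proof is correct, but it takes a genuinely different route from the paper's. The paper proves the even case by writing $zP_{2n}^{(m)}(z)$ in the adjacent $(m+1)$-family via the Geronimus ansatz \eqref{ger1} and then pairing with $z^j$, using $\langle zP_{2n}^{(m)}(z),z^j\rangle=\langle P_{2n}^{(m)}(z),z^{j-1}\rangle=0$ together with the skew-orthogonality of the $(m+1)$-family to force all lower expansion coefficients to vanish. You instead work directly at the level of the Pfaffian entries: translation invariance $\mu_{i,j}=\mu_{i-1,j-1}$ makes every moment-only sub-Pfaffian shift-invariant, so after expanding along the $z$-row the numerators of the two families differ by exactly one factor of $z$, which is absorbed by the $z^{-m}$ versus $z^{-(m+1)}$ prefactors; your observation that the shift is a position-preserving bijection commuting with deletion correctly disposes of the sign bookkeeping. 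Your argument is more elementary --- purely linear-algebraic, needing neither orthogonality nor the time flows --- and is essentially a self-contained version of the alternative the paper only sketches in the Remark following the proposition (which routes through $\tau_{2n}^{(m)}=\tau_{2n}^{(m+1)}$ and the Schur-polynomial formula \eqref{schur}, hence presupposes the $t$-evolution). What the paper's approach buys in exchange is coherence with the Geronimus-transformation theme of that subsection: it exhibits the result as the statement that all lower Geronimus coefficients vanish, which is the form in which the transformation is used later. You also handle both parities explicitly, whereas the paper proves only the even case and asserts the odd case is similar.
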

\begin{proof}
We prove the identity $P_{2n}^{(m)}(z)=P_{2n}^{(m+1)}(z)$, while the odd-order case can be established similarly.
With use of the assumption \eqref{ger1}, we can explicitly express
\begin{align*}
 zP_{2n}^{(m)}(z)&=\sum_{i=0}^{n-1}\left(\frac{\alpha_{i}^{(m)}}{z^m}\Pf(m+1,\cdots,m+2i+1,z)
+\frac{\beta_i^{(m)}}{z^m}\Pf(m+1,\cdots,m+2i,m+2i+2,z)\right)\notag\\
&\quad+\frac{1}{z^m}\Pf(m+1,\cdots,m+2n+1,z).
\end{align*}
Taking the skew symmetric inner product on both sides with $z^j$ and note that
\begin{align*}
\langle zP_{2n}^{(m)}(z),z^j\rangle=\langle P_{2n}^{(m)}(z),z^{j-1}\rangle=0,\quad \text{ if $j-1=0,\cdots,2n$},
\end{align*}
 we conclude that $\alpha_i^{(m)}=\beta_i^{(m)}=0$ for $i=0,\cdots,n-1$.
\end{proof}
\begin{remark}
This result can also be obtained directly from \eqref{schur} by substituting $\tau_{2n}^{(m)}=\tau_{2n}^{(m+1)}$. The fact that $\tau_{2n}^{(m)}=\tau_{2n}^{(m+1)}$ could be verified by expanding the Pfaffians with use of \eqref{lsop}.
\end{remark}

The substitution of  \eqref{ad-lsop} into Christoffel transformations \eqref{sop-ct1}-\eqref{sop-ct2} yields the following identities (see \cite[Prop. 2]{miki20})
\begin{subequations}
\begin{align}
&P_{2n+1}(z)-\mathcal{A}_n P_{2n}(z)=z\left(
P_{2n}(z)-\mathcal{B}_n P_{2n-2}(z)
\right),\label{toda1}\\
&P_{2n+2}(z)-P_{2n}(z)=z\left(
P_{2n+1}(z)-\mathcal{A}_{n+1}P_{2n}(z)\right)\label{toda2}
\end{align}
\end{subequations}
where
\begin{align}\label{coe1}
\mathcal{A}_n=\p_{t_1}\log \tau_{2n},\quad \mathcal{B}_n=\frac{\tau_{2n-2}\tau_{2n+2}}{\tau_{2n}^2}.
\end{align}
Moreover,  comparing the coefficients of these polynomials, one can get the reduction of
\eqref{dkp}
\begin{align}\label{1dtoda}
D_{t_1}\tau_{2n}\cdot s_{2n-\ell}(-\tilde{\p}_t)\tau_{2n}=\tau_{2n+2}s_{2n-1-\ell}(-\tilde{\p}_t)\tau_{2n-2}.
\end{align}
The first nontrivial example of \eqref{1dtoda} is the following one
\begin{align*}
D_{t_1}^2\tau_{2n}\cdot\tau_{2n}=2\tau_{2n-2}\tau_{2n+2},
\end{align*}
which is indeed a Toda lattice. It is not surprising that the Toda lattice has a Pfaffian tau function since there is a one-to-one correspondence between the Toeplitz-type Pfaffian and the Hankel determinant \cite[Prop. 2.3]{stembridge90}
\begin{align*}
\Pf(\mu_{j-i})_{i,j=1}^{2n}=\det\left(
x_{i,j}
\right)_{i,j=1}^n, \quad x_{i,j}=\mu_{|i-j|+1}+\cdots+\mu_{i+j+1},
\end{align*}
and thus the Hankel determinant solution of the Toda lattice has a Pfaffian version when the evolution is properly chosen. In recent paper \cite{miki20}, the author showed that how to write down a Hankel determinant in terms of Toeplitz-type Pfaffian.

Moreover, it is of interest to obtain the Lax pair of Toda lattice in terms of SOPs. For this purpose, we study the evolution of the eigenvectors under the $t_1$-flow. Besides identity \eqref{derivative}, one can establish the following proposition.
\begin{proposition}
The following identity holds
\begin{align}\label{derivative2}
\frac{1}{\tau_{2n}}(z+\p_{t_1})\Big(\tau_{2n}P_{2n+1}(z)\Big)=P_{2n+2}(z)+(\mathcal{A}_{n}+\mathcal{A}_{n+1})P_{2n+1}(z)-\mathcal{D}_nP_{2n}(z)+\mathcal{B}_n P_{2n-2}(z),
\end{align}
where $\mathcal{A}_n$ and $\mathcal{B}_n$ are given in \eqref{coe1} and
\begin{align*}
\mathcal{D}_n=\frac{s_2(-\tilde{\p}_t)\tau_{2n+2}}{\tau_{2n+2}}+\frac{s_2(\tilde{\p}_t)\tau_{2n}}{\tau_{2n}},\quad s_2(t)=t_2+\frac{1}{2}t_1^2.
\end{align*}
\end{proposition}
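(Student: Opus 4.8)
The plan is to differentiate once more. Throughout I work under the Laurent reduction of Subsection~\ref{stoda}, so that all superscripts disappear and $\mathcal{A}_n,\mathcal{B}_n$ are as in \eqref{coe1}. Since $\frac{1}{\tau_{2n}}(z+\p_{t_1})\big(\tau_{2n}P_{2n+1}\big)=zP_{2n+1}+\p_{t_1}P_{2n+1}+\mathcal{A}_nP_{2n+1}$, the whole identity reduces to expressing $zP_{2n+1}$ and $\p_{t_1}P_{2n+1}$ in the skew-orthogonal basis $\{P_j\}$. The multiplication term $zP_{2n+1}$ is handled directly by the Christoffel transformation \eqref{toda2}, so the real task is to compute $\p_{t_1}P_{2n+1}$.

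First I would record the first-order flow satisfied by the even polynomials. Comparing Proposition~\ref{prop1} (with $m=0$), which in the reduced case reads $zP_{2n}+\p_{t_1}P_{2n}+\mathcal{A}_nP_{2n}=P_{2n+1}$, with the Christoffel relation \eqref{toda1} written as $P_{2n+1}=(z+\mathcal{A}_n)P_{2n}-\mathcal{B}_n zP_{2n-2}$, the terms $zP_{2n}$ and $\mathcal{A}_nP_{2n}$ cancel and one is left with the clean evolution $\p_{t_1}P_{2n}=-\mathcal{B}_nzP_{2n-2}$. This is the analogue of the strictly lower part of the Toda Lax operator and needs no input beyond \eqref{derivative} and \eqref{toda1}.

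Next I would differentiate \eqref{toda1}. Applying $\p_{t_1}$ to $P_{2n+1}=(z+\mathcal{A}_n)P_{2n}-\mathcal{B}_nzP_{2n-2}$, the derivatives $\p_{t_1}P_{2n}$ and $\p_{t_1}P_{2n-2}$ are removed by the flow just obtained, while $\p_{t_1}\mathcal{A}_n$ is replaced using the Toda equation $\p_{t_1}\mathcal{A}_n=\p_{t_1}^2\log\tau_{2n}=\mathcal{B}_n$ that follows from \eqref{1dtoda}. Substituting this expression for $\p_{t_1}P_{2n+1}$ together with $zP_{2n+1}$ from \eqref{toda2} into $zP_{2n+1}+\p_{t_1}P_{2n+1}+\mathcal{A}_nP_{2n+1}$, and then repeatedly using \eqref{toda1}--\eqref{toda2} to rewrite the residual products $zP_{2n}$, $zP_{2n-2}$, $z^2P_{2n-2}$ and $z^2P_{2n-4}$ in the basis $\{P_j\}$, should collapse everything onto $P_{2n+2}$, $P_{2n+1}$, $P_{2n}$ and $P_{2n-2}$.

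The \emph{main obstacle} is exactly this last collapse: the Christoffel reductions generate a descending cascade of terms $P_{2n-1},P_{2n-3},P_{2n-4},\dots$, and one must check that they telescope away. A first reassuring sign is that the two contributions to $z^2P_{2n-4}$ (one from $\p_{t_1}^2P_{2n}$, one from reducing $z^2P_{2n-2}$) carry opposite signs $\pm\mathcal{B}_n\mathcal{B}_{n-1}$ and cancel, and I expect the same mechanism to annihilate all terms below $P_{2n-2}$. Rather than tracking the entire cascade, I would finish by a coefficient argument: matching the top three powers of $z$ fixes the four coefficients. Indeed the $z^{2n+2}$-coefficient gives the monic leading term $P_{2n+2}$; the $z^{2n+1}$-coefficient gives $\mathcal{A}_n+\mathcal{A}_{n+1}$, using that $P_{2n+1}$ has no $z^{2n}$ term because its sub-leading coefficient $\frac{1}{\tau_{2n}}\big(s_1(-\tilde\p)\tau_{2n}+\p_{t_1}\tau_{2n}\big)=\frac{1}{\tau_{2n}}\big(-\p_{t_1}\tau_{2n}+\p_{t_1}\tau_{2n}\big)$ vanishes; and the $z^{2n}$-coefficient, evaluated through $s_2(\pm\tilde\p)=\tfrac12\p_{t_1}^2\pm\tfrac12\p_{t_2}$, produces precisely $\mathcal{D}_n=\frac{s_2(-\tilde\p)\tau_{2n+2}}{\tau_{2n+2}}+\frac{s_2(\tilde\p)\tau_{2n}}{\tau_{2n}}$. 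The difference of the two sides is then a polynomial of degree at most $2n-1$, which I would show vanishes by uniqueness of the skew-orthogonal expansion, equivalently by checking it is skew-orthogonal to $P_0,\dots,P_{2n-1}$, thereby completing the proof.
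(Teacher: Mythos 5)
Your route is genuinely different from the paper's. The paper proves the identity by pure Pfaffian algebra: it first shows $(z+\p_{t_1})(\tau_{2n}P_{2n+1})=\Pf(0,\dots,2n-1,2n+2,z)+\Pf(0,\dots,2n-2,2n,2n+1,z)$ and then expands each Pfaffian with a three-term Pfaffian identity, so no input from the Toda equation is needed. You instead differentiate the Christoffel relations \eqref{toda1}--\eqref{toda2}, using the evolution $\p_{t_1}P_{2n}=-\mathcal{B}_n zP_{2n-2}$ (which does follow from Proposition \ref{prop1} and \eqref{toda1} exactly as you say) and the already-derived bilinear Toda equation to replace $\p_{t_1}\mathcal{A}_n$ by $\mathcal{B}_n$; since \eqref{1dtoda} is obtained in the paper before this Proposition, there is no circularity. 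I checked that the cascade you worry about in fact terminates after two steps: the $z^2P_{2n-4}$ contributions cancel as you predict, and the residual $zP_{2n-2}$ terms combine, via $\p_{t_1}\mathcal{B}_n=\mathcal{B}_n(\mathcal{A}_{n+1}+\mathcal{A}_{n-1}-2\mathcal{A}_n)$, into $\mathcal{A}_{n+1}\left(zP_{2n}-\mathcal{B}_n zP_{2n-2}\right)=\mathcal{A}_{n+1}\left(P_{2n+1}-\mathcal{A}_nP_{2n}\right)$, leaving exactly $P_{2n+2}+(\mathcal{A}_n+\mathcal{A}_{n+1})P_{2n+1}-(1+\mathcal{A}_n\mathcal{A}_{n+1})P_{2n}+\mathcal{B}_nP_{2n-2}$.

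The gap is in how you propose to finish. Matching the top three powers of $z$ cannot fix four coefficients: the coefficients of $z^{2n+2},z^{2n+1},z^{2n}$ determine only the multiples of $P_{2n+2},P_{2n+1},P_{2n}$ in the expansion and say nothing about the $\mathcal{B}_nP_{2n-2}$ term or about the vanishing of the $P_{2n-1},P_{2n-3},\dots$ components. The fallback you name---showing the difference is skew-orthogonal to $P_0,\dots,P_{2n-1}$---is precisely the substantive computation: evaluating $\langle zP_{2n+1}+\p_{t_1}P_{2n+1},z^i\rangle$ needs both the Laurent property $\langle zf,zg\rangle=\langle f,g\rangle$ and the derivative rule for the pairing, and must return exactly $\mathcal{B}_n\tau_{2n}/\tau_{2n-2}$ at the level that detects $P_{2n-2}$; none of this is carried out. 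The repair is simply not to abandon the cascade, which closes in finitely many steps as above; your (correct) $z^{2n}$-coefficient computation, which rests on the vanishing of the subleading coefficient of $P_{2n+1}$ and gives $-s_2(\tilde{\p}_t)\tau_{2n}/\tau_{2n}$, then converts $1+\mathcal{A}_n\mathcal{A}_{n+1}$ into the stated Schur-polynomial form of $\mathcal{D}_n$. As written, the proposal identifies the right ingredients but leaves the decisive step unproved.
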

\begin{proof}
The proof are based on three steps. The first one is to show
\begin{align*}
(z+\p_{t_1})(\tau_{2n}P_{2n+1}(z))=\Pf(0,\cdots,2n-1,2n+2,z)+\Pf(0,\cdots,2n-2,2n,2n+1,z).
\end{align*}
This step is an analogue of Proposition \ref{prop1}, and we omit the details here.
Then, by using the Pfaffian identity
\begin{align*}
\Pf(\ast,2n,&2n+1,2n+2,z)\Pf(\ast)=\Pf(\ast,2n,2n+1)\Pf(\ast,2n+2,z)\\
&-\Pf(\ast,2n,2n+2)\Pf(\ast,2n+1,z)+\Pf(\ast,2n,z)\Pf(\ast,2n+1,2n+2),
\end{align*}
with $\{\ast\}=\{0,\cdots,2n-1\}$, one can obtain that
\begin{align*}
\Pf(0,\cdots,2n-1,2n+2,z)=\tau_{2n}\left(P_{2n+2}(z)+\mathcal{A}_{n+1}P_{2n+1}(z)-\frac{s_2(-\tilde{\p}_t)\tau_{2n+2}}{\tau_{2n+2}}P_{2n}(z)\right).
\end{align*}
Moreover, the Pfaffian identity
\begin{align*}
\Pf(\ast,2n-1)&\Pf(\ast,2n,2n+1,z)=\Pf(\ast,2n)\Pf(\ast,2n-1,2n+1,z)\\
&-\Pf(\ast,2n+1)\Pf(\ast,2n-1,2n,z)+\Pf(\ast,z)\Pf(\ast,2n-1,2n,2n+1)
\end{align*}
with $\{\ast\}=\{0,\cdots,2n-2\}$  leads to
\begin{align*}
\Pf(0,\cdots,2n-2,2n,2n+1,z)=\tau_{2n}\left(
\mathcal{A}_nP_{2n}(z)-\frac{s_2(\tilde{\p}_t)\tau_{2n}}{\tau_{2n}}P_{2n}(z)+\mathcal{B}_n P_{2n-2}(z)
\right).
\end{align*}
Combining these results gives \eqref{derivative2}.
\end{proof}

Therefore, with the help of \eqref{toda1}-\eqref{toda2}, one can get the time evolutions for the Laurent SOPs
\begin{align*}
\p_{t_1}P_{2n}(z)-\mathcal{B}_n\p_{t_1}P_{2n-2}(z)&=-\mathcal{B}_n P_{2n-1}(z)+\mathcal{A}_{n-1}\mathcal{B}_n P_{2n-2}(z),\\
\p_{t_1}P_{2n+1}(z)-\mathcal{A}_{n+1}\p_{t_1}P_{2n}(z)&=\left(
\mathcal{A}_n\mathcal{A}_{n+1}-\mathcal{D}_n+1
\right)P_{2n}(z)+\mathcal{B}_nP_{2n-2}(z).
\end{align*}
The compatibility condition of the spectral problem and time evolutions give rise to the Toda lattice
\begin{align*}
\p_{t_1}\mathcal{B}_n=\mathcal{B}_n(\mathcal{C}_n-\mathcal{C}_{n-1}),\quad \p_{t_1}\mathcal{C}_n=\mathcal{B}_{n+1}-\mathcal{B}_n,
\end{align*}
where $\mathcal{C}_n=\mathcal{A}_{n+1}-\mathcal{A}_n$.
\begin{remark}
It is not surprising that Toda lattice has a Pfaffian tau function with wave vector SOPs. In the earlier work of Kodama and Pierce \cite{kodama07}, the authors showed that after some certain moment constraints, SOPs $\{P_{n}(z)\}_{n\in\mathbb{N}}$ are connected with standard OPs $
\{p_n(z)\}_{n\in\mathbb{N}}$ such that $P_n(z)=p_n(z^2)$, and the Pfaff lattice becomes Toda lattice.
\end{remark}

\section{Applications of PSOPs' Christoffel transformation}\label{psop_ct}
 In this part, we derive integrable hierarchies with regards to the adjacent family of PSOPs.

\subsection{General case}
In the most general case, we consider the commuting flows such that
\begin{align}\label{te2}
\p_{t_n}\mu_{i,j}=\mu_{i+n,j}+\mu_{i,j+n}, \quad \p_{t_n}\beta_j^{(k)}=\beta_{j+n}^{(k)}.
\end{align}
By \cite[Prop. 4.4]{li19}, the coefficients of the multi-component PSOPs can be expressed in terms of Schur polynomials acting on the normalisation factors.
\begin{proposition}
With time evolutions \eqref{te2}, the multi-component PSOPs have the form
\begin{align}\label{psopschur}
Q^{(m)}_{2n}(z)=\frac{1}{\tau_{2n}^{(m)}}\sum_{\ell=0}^{2n}z^{2n-\ell}s_\ell(-\tilde{\p}_t)\tau_{2n}^{(m)},\quad Q^{(m)}_{2n+1,k}(z)=\frac{1}{\tau_{2n+1,k}^{(m)}}\sum_{\ell=0}^{2n+1}z^{2n+1-\ell}s_\ell(-\tilde{\p}_t)\tau_{2n+1,k}^{(m)}.
\end{align}
\end{proposition}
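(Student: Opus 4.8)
The plan is to route everything through the Sato shift operator $G(z):=\exp\bigl(-\sum_{n\ge 1}\tfrac{z^{-n}}{n}\p_{t_n}\bigr)$, whose expansion $G(z)f=\sum_{\ell\ge 0}z^{-\ell}s_\ell(-\tilde{\p}_t)f$ is exactly the generating function of the Schur polynomials on the right of \eqref{psopschur}. First I would dispense with the even-order case: $Q_{2n}^{(m)}$ and $\tau_{2n}^{(m)}$ coincide verbatim with the SOP objects, involve only the moments $\mu_{i,j}$, and the $\mu$-part of \eqref{te2} agrees with \eqref{te}; hence the even-order formula is literally \eqref{schur} and needs nothing new. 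All the content therefore sits in the odd polynomials $Q_{2n+1,k}^{(m)}$, which additionally feel the row $d_k$ and the flow $\p_{t_n}\beta_j^{(k)}=\beta_{j+n}^{(k)}$. The target is the wave-function identity
\[
z^{-m}\Pf(d_k,m,\cdots,m+2n+1,z)=z^{2n+1}\,G(z)\,\tau_{2n+1,k}^{(m)},
\]
since expanding the right side as $\sum_{\ell\ge0}z^{2n+1-\ell}s_\ell(-\tilde{\p}_t)\tau_{2n+1,k}^{(m)}$ and dividing by $\tau_{2n+1,k}^{(m)}$ yields \eqref{psopschur} at once, the series terminating at $\ell=2n+1$ because the left side is a polynomial of degree $2n+1$ in $z$.

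To evaluate $G(z)$ I would work purely algebraically. Writing $E$ for the index-raising operator, \eqref{te2} says $\p_{t_n}$ acts as $E_x^n+E_y^n$ on $\mu_{i,j}$ and as $E^n$ on $\beta_j^{(k)}$; since the two pieces commute and $\exp\bigl(-\sum_n z^{-n}E^n/n\bigr)=1-E/z$, this resums to the finite substitutions
\[
G(z)\mu_{i,j}=\mu_{i,j}-z^{-1}\bigl(\mu_{i+1,j}+\mu_{i,j+1}\bigr)+z^{-2}\mu_{i+1,j+1},\qquad G(z)\beta_j^{(k)}=\beta_j^{(k)}-z^{-1}\beta_{j+1}^{(k)}.
\]
Thus $G(z)$ inserts one factor $1-E/z$ for every integration variable of the Pfaffian $\tau_{2n+1,k}^{(m)}$, a single such factor on the asymmetric $d_k$-row and a symmetric pair on each moment row. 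Expanding this product and using that a raised index reproduces the bordered rows $m+2n+1$ and $z$ — with $\Pf(d_k,z)=\Pf(d,z)=0$ killing the $d_k$--$z$ pairing — should collapse the weighted Pfaffian into the bordered Pfaffian on the left.

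Equivalently, and more transparently at the level of coefficients, a Pfaffian expansion of $Q_{2n+1,k}^{(m)}$ along its $z$-column exhibits its coefficients as the sub-Pfaffians $\Pf(d_k,m,\cdots,\widehat{m+2n+1-\ell},\cdots,m+2n+1)$, reducing the theorem to the lemma
\[
s_\ell(-\tilde{\p}_t)\,\tau_{2n+1,k}^{(m)}=\pm\,\Pf(d_k,m,\cdots,\widehat{m+2n+1-\ell},\cdots,m+2n+1),
\]
which I would prove by induction on $\ell$ from Newton's recurrence $\ell\,s_\ell(-\tilde{\p}_t)=-\sum_{j=1}^{\ell}\p_{t_j}s_{\ell-j}(-\tilde{\p}_t)$ together with the bumping rule that $\p_{t_j}$ raises each Pfaffian index by $j$, colliding terms annihilating. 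The main obstacle is precisely the bookkeeping here: the $d_k$-row shifts by a single factor $1-E/z$ whereas every moment row shifts by two, so the index-bumping combinatorics and the attendant signs must be tracked with care, and one must check that the telescoping genuinely truncates the Schur series at $\ell=2n+1$ rather than producing spurious higher sub-Pfaffians. Once the sign and the truncation are pinned down, reassembling the coefficients gives \eqref{psopschur}.
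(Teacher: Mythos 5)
The paper never proves this proposition at all---it is imported wholesale from \cite[Prop.\ 4.4]{li19}---so you are supplying a missing argument rather than shadowing one, and your strategy is the standard wave-function argument that does work. The even case is indeed verbatim \eqref{schur}. For the odd case, the two steps you leave as ``should'' close cleanly, and closing them makes your second, sign-heavy inductive route via Newton's recurrence unnecessary. First, since each $\p_{t_n}$ is a derivation of the ring generated by the moments, $G(z)$ is a ring automorphism over formal series in $z^{-1}$, so it passes through the polynomial expression of a Pfaffian and acts entrywise; that is the precise meaning of ``one factor $1-E/z$ per row''. Second, the collapse to the bordered Pfaffian is exactly the Pfaffian-preserving simultaneous operation $C_i\to C_i-z^{-1}C_{i+1}$, $R_i\to R_i-z^{-1}R_{i+1}$ for $i=m,\dots,m+2n$ applied to the $(2n+4)$-dimensional skew matrix underlying $\Pf(d_k,m,\dots,m+2n+1,z)$: it turns the moment block into $G(z)\mu_{i,j}$ and the $d_k$-row into $G(z)\beta_i^{(k)}$, while annihilating every entry $z^i-z^{-1}z^{i+1}$ of the $z$-row except the single entry $z^{m+2n+1}$ (the $d_k$--$z$ entry being $0$ by convention). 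Expanding along the $z$-row yields $\Pf(d_k,m,\dots,m+2n+1,z)=z^{m+2n+1}G(z)\tau_{2n+1,k}^{(m)}$, the overall sign being fixed by matching leading coefficients as $z\to\infty$. Note this also repairs a small logical inversion in your write-up: the truncation of the Schur series at $\ell=2n+1$ is a \emph{consequence} of this identity (the right-hand side is manifestly a finite Laurent polynomial), not something to be inferred from the degree of the left-hand side, which would be circular before the identity is established.
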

Substituting the series sum into the Christoffel transformations \eqref{psop-ct1}-\eqref{psop-ct2}, one can obtain the following bilinear integrable hierarchy
\begin{align}\label{ih}
\begin{aligned}
&\tau_{2n}^{(m+1)}s_{2n+1-\ell_1}(-\tilde{\p}_t)\tau_{2n+1,k}^{(m)}+\tau_{2n+1,k}^{(m+1)}s_{2n-\ell_1}(-\tilde{\p}_t)\tau_{2n}^{(m)}\\
&\qquad\qquad\qquad\qquad=\tau_{2n+1,k}^{(m)}s_{2n+1-\ell_1}(-\tilde{\p}_t)\tau_{2n}^{(m+1)}+\tau_{2n+2}^{(m)}s_{2n-\ell_1}(-\tilde{\p}_t)\tau_{2n-1,k}^{(m+1)},\\
&\tau_{2n+1,k}^{(m+1)}s_{2n+2-\ell_2}(-\tilde{\p}_t)\tau_{2n+2}^{(m)}+\tau_{2n+2}^{(m+1)}s_{2n+1-\ell_2}(-\tilde{\p}_t)\tau_{2n+1,k}^{(m)}\\
&\qquad\qquad\qquad\qquad=\tau_{2n+2}^{(m)}s_{2n+2-\ell_2}(-\tilde{\p}_t)\tau_{2n+1,k}^{(m+1)}+\tau_{2n+3,k}^{(m)}s_{2n+1-\ell_2}(-\tilde{\p}_t)\tau_{2n}^{(m+1)},
\end{aligned}
\end{align}
where the first nontrivial example is the case of $\ell_1=2n$, $\ell_2=2n+1$
\begin{subequations}
\begin{align}
\tau_{2n+2}^{(m)}\tau_{2n-1,k}^{(m+1)}&=D_{t_1}\tau_{2n}^{(m+1)}\cdot\tau_{2n+1,k}^{(m)}+\tau_{2n+1,k}^{(m+1)}\tau_{2n}^{(m)},\label{glv1}\\
\tau_{2n+3,k}^{(m)}\tau_{2n}^{(m+1)}&=D_{t_1}\tau_{2n+1,k}^{(m+1)}\cdot\tau_{2n+2}^{(m)}+\tau_{2n+2}^{(m+1)}\tau_{2n+1,k}^{(m)}.\label{glv2}
\end{align}
\end{subequations}
This integrable hierarchy involves two different families of Pfaffian tau functions, namely, the even and odd ones, and it was called the large BKP hierarchy in \cite{kac98,vandeleur15}.
The one-component case of this hierarchy leads to the so-called semi-discrete generalised Lotka-Volterra lattice
\begin{align}\label{glv}
\tau_{n+2}^{(m)}\tau_{n-1}^{(m+1)}=D_{t_1}\tau_{n}^{(m+1)}\cdot\tau_{n+1}^{(m)}+\tau_n^{(m)}\tau_{n+1}^{(m+1)}.
\end{align}
Note that the one-component PSOPs has been considered in \cite{chang182} with $t_1$-flow involved.
Moreover, with the help of adjacent family of PSOPs, one can give a Lax-type pair for the integrable lattice. For simplicity,
we consider the one-component case. However, the result can be easily extended to the multi-component case.
\begin{proposition}
The time-dependent PSOPs have the following properties
\begin{align}\label{mixed}
(z+\p_{t_1})Q_{n}^{(m)}(z)&=Q_{n+1}^{(m)}(z)+\mathcal{K}_n^mQ_{n}^{(m)}(z)-\mathcal{J}_n^mQ_{n-1}^{(m)}(z),
\end{align}
where
\begin{align}\label{jk}
\mathcal{K}_n^m=\p_{t_1}\log\frac{\tau_{n+1}^{(m)}}{\tau_n^{(m)}},\quad \mathcal{J}_n^m=\frac{\tau_{n+2}^{(m)}\tau_{n-1}^{(m)}}{\tau_{n}^{(m)}\tau_{n+1}^{(m)}}.
\end{align}
\end{proposition}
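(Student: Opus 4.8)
The plan is to reduce \eqref{mixed} to a single generating bilinear identity for the Pfaffian tau-functions, treating even and odd $n$ separately because the odd PSOPs carry the extra index $d$. First I would pin down $\mathcal{K}_n^m$ by a leading-order computation. Using the Schur form \eqref{psopschur}, the polynomial $(z+\p_{t_1})Q_n^{(m)}(z)-Q_{n+1}^{(m)}(z)$ has degree at most $n$: the monic top terms $z^{n+1}$ cancel, and since the subleading coefficient of $Q_n^{(m)}$ is $s_1(-\tilde\p_t)\tau_n^{(m)}/\tau_n^{(m)}=-\p_{t_1}\log\tau_n^{(m)}$, the coefficient of $z^n$ in that difference is exactly $\p_{t_1}\log(\tau_{n+1}^{(m)}/\tau_n^{(m)})=\mathcal{K}_n^m$ (recall $\p_{t_1}Q_n^{(m)}$ has degree at most $n-1$). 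Hence $(z+\p_{t_1})Q_n^{(m)}-Q_{n+1}^{(m)}-\mathcal{K}_n^m Q_n^{(m)}$ is a polynomial of degree at most $n-1$, and the whole assertion is that this remainder equals $-\mathcal{J}_n^m Q_{n-1}^{(m)}$.

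To identify the remainder I would pass to Pfaffians. Applying $(z+\p_{t_1})$ to $z^m\tau_n^{(m)}Q_n^{(m)}=\Pf(\cdots,z)$ and using \eqref{te2}, the derivation rule $\p_{t_1}\Pf(i_1,\dots,i_{2r})=\sum_a\Pf(i_1,\dots,i_a+1,\dots,i_{2r})$ (with $\p_{t_1}\beta_j=\beta_{j+1}$ shifting the $d$-row) makes all interior index shifts telescope exactly as in Proposition~\ref{prop1}; only the top-index shift and the correction $-(\p_{t_1}\log\tau_n^{(m)})Q_n^{(m)}$ from differentiating the normalisation survive. A four-term Pl\"ucker identity of the same family used in the Christoffel proofs then rewrites the surviving shifted Pfaffian in terms of $\Pf$'s representing $Q_{n+1}^{(m)},Q_n^{(m)},Q_{n-1}^{(m)}$. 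Equivalently, inserting the Miwa-shift form $Q_n^{(m)}(z)=z^n\tau_n^{(m)}(t-[z^{-1}])/\tau_n^{(m)}(t)$ (which follows from \eqref{psopschur}, with $[z^{-1}]=(z^{-1},z^{-2}/2,\dots)$) and clearing the $\tau$-denominators, the coefficients in \eqref{jk} collapse the whole statement to the single relation
\[
z^2\big(\hat\tau_n\tau_{n+1}-\hat\tau_{n+1}\tau_n\big)+z\big(\widehat{\p_{t_1}\tau_n}\,\tau_{n+1}-\hat\tau_n\,\p_{t_1}\tau_{n+1}\big)+\tau_{n+2}\hat\tau_{n-1}=0,
\]
where $\tau_k=\tau_k^{(m)}(t)$ and $\hat\tau_k=\tau_k^{(m)}(t-[z^{-1}])$; here the $\p_{t_1}\log\tau_n^{(m)}$ pieces cancel automatically, the $z^2$ and $z^1$ orders vanish identically, and the first nontrivial order $z^0$ is precisely the bilinear equation $(D_{t_1}^2+D_{t_2})\tau_n^{(m)}\cdot\tau_{n+1}^{(m)}=2\tau_{n+2}^{(m)}\tau_{n-1}^{(m)}$, with $\mathcal{J}_n^m$ arising as the weight $\tau_{n+2}^{(m)}\tau_{n-1}^{(m)}/(\tau_n^{(m)}\tau_{n+1}^{(m)})$ of the down-shifted Pfaffian.

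The hard part will be the bookkeeping in the odd case. There $\p_{t_1}\tau_{2n+1}^{(m)}=\p_{t_1}\Pf(d,m,\dots,m+2n)$ produces, besides the expected top-index shift, a term in which the entire $d$-row is shifted ($\beta_{m+j}\mapsto\beta_{m+j+1}$); dually, for even $n$ the right-hand side of \eqref{mixed} contains $Q_{2n+1}^{(m)}$ and $Q_{2n-1}^{(m)}$, which carry $d$, whereas the left-hand side does not. The genuine content is therefore to verify that all $\beta$-dependent ($d$-carrying) Pfaffians cancel in pairs, leaving the $\beta$-free identity above; this is where the precise four-term Pl\"ucker identity, the definitions $\tau_{2n+1}^{(m)}=\Pf(d,m,\dots,m+2n)$ and $\Pf(d,z)=0$, and careful sign tracking must be combined. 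Once the even case is settled, the odd case follows by the same two-step scheme with the roles of the $d$-free and $d$-carrying Pfaffians interchanged, and assembling the two cases yields \eqref{mixed} for all $n$.
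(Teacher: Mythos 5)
Your opening step (fixing $\mathcal{K}_n^m$ by comparing the $z^n$ coefficients via \eqref{psopschur}) is correct, and your first sketched route --- telescoping the $t_1$-derivative of $z^m\tau_n^{(m)}Q_n^{(m)}$ to leave only the top-index-shifted Pfaffian, then applying a four-term Pfaffian identity --- is exactly the paper's proof. But you never execute the decisive step of that route: which four-term identity, on which index set $\{\ast\}$, and why it delivers precisely the coefficient $-\mathcal{J}_n^m=-\tau_{n+2}^{(m)}\tau_{n-1}^{(m)}/(\tau_n^{(m)}\tau_{n+1}^{(m)})$ in front of $Q_{n-1}^{(m)}$. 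Instead you pivot to the Miwa-shift reformulation as the actual identification mechanism, and that route has a genuine gap: the relation you display is a full Laurent series in $z$, and \eqref{mixed} is equivalent to the vanishing of \emph{every} order $z^{-k}$, $k\ge 0$, not just the $z^0$ order. You correctly compute that the $z^0$ order is $(D_{t_1}^2+D_{t_2})\tau_n^{(m)}\cdot\tau_{n+1}^{(m)}=2\tau_{n+2}^{(m)}\tau_{n-1}^{(m)}$, but you neither prove this bilinear identity for the Pfaffian tau functions at hand nor address the infinitely many lower orders. In the logic of this paper such bilinear hierarchies (e.g.\ \eqref{ih}) are \emph{consequences} of recurrences of the type \eqref{mixed}, so assuming them here is circular; the Pfaffian identity route is what actually closes the argument.

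Two further points in your last paragraph are off. First, under \eqref{te2} one has $\p_{t_1}\Pf(d,i)=\p_{t_1}\beta_i=\beta_{i+1}=\Pf(d,i+1)$: differentiating a $d$-carrying entry just shifts its numeric index, exactly like the $\mu_{i,j}$ entries, so there is no extra ``whole $d$-row shifted'' term and the telescoping of Proposition \ref{prop1} goes through verbatim; the only surviving term is $\Pf(d,m,\cdots,m+2n,m+2n+2,z)$ (odd case) or $\Pf(m,\cdots,m+2n-1,m+2n+1,z)$ (even case), plus the $\p_{t_1}\log\tau_n^{(m)}$ correction that feeds into $\mathcal{K}_n^m$. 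Second, the $d$-carrying Pfaffians are not supposed to ``cancel in pairs'': for even $n$ the right-hand side of \eqref{mixed} genuinely contains $Q_{n+1}^{(m)}$ and $Q_{n-1}^{(m)}$, both of which carry the index $d$, and the target identity is not $\beta$-free (indeed $\tau_{n\pm1}^{(m)}$ in \eqref{jk} are odd, hence $d$-carrying, Pfaffians). What remains to be done is the concrete expansion: for odd $n=2n'+1$, take $\{\ast\}=\{m,\cdots,m+2n'\}$ and use
\begin{align*}
\Pf(d,\ast)\Pf(\ast,a,b,z)&=\Pf(\ast,a)\Pf(d,\ast,b,z)-\Pf(\ast,b)\Pf(d,\ast,a,z)+\Pf(\ast,z)\Pf(d,\ast,a,b)
\end{align*}
with $a=m+2n'+1$, $b=m+2n'+2$, which expresses the surviving Pfaffian as $\tau\bigl(Q_{n+1}^{(m)}+\p_{t_1}\log\tau_{n+1}^{(m)}\,Q_n^{(m)}-\mathcal{J}_n^m Q_{n-1}^{(m)}\bigr)$ and hence yields \eqref{mixed}; the even case is parallel. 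Without this step the proposal is a plan rather than a proof.
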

\begin{proof}
The odd and even cases are almost the same, so we consider the odd case.
First, as in the proof of Prop. \ref{prop1}, we have
\begin{align}\label{eq1}
\frac{1}{\tau_{2n+1}^{(m)}}(z+\p_{t_1})(\tau_{2n+1}^{(m)}Q_{2n+1}^{(m)}(z))=z^{-m}\Pf(d,m,\cdots,m+2n,m+2n+2,z).
\end{align}
Choosing $\{\ast\}=\{m,\cdots,m+2n\}$ and using Pfaffian identity
\begin{align*}
\Pf(d,\ast)&\Pf(\ast,m+2n+1,m+2n+2,z)=\Pf(\ast,m+2n+1)\Pf(d,\ast,m+2n+2,z)\\
&-\Pf(\ast,m+2n+2)\Pf(d,\ast,m+2n+1,z)+\Pf(\ast,z)\Pf(d,\ast,m+2n+1,m+2n+2),
\end{align*}
one finds that the right-hand side of equation \eqref{eq1} is indeed
\begin{align*}
Q_{2n+2}^{(m)}(z)+\p_{t_1}\log\tau_{2n+2}^{(m)} Q_{2n+1}^{(m)}(z)-\mathcal{J}_{2n+1}^m Q_{2n}^{(m)}(z).
\end{align*}
Substituting it into \eqref{eq1} provides \eqref{mixed}-\eqref{jk} for odd $n$.
\end{proof}
Therefore, if we denote $\Phi^{(m)}=\left(
Q_0^{(m)}(z),Q_1^{(m)}(z),\cdots
\right)^\top$, then the Christoffel transformation of PSOPs \eqref{psop-ct}  can be rewrite as follows
\begin{align}\label{lax11}
z\Phi^{(m+1)}=L^{(m)}\Phi^{(m)},\quad L^{(m)}=(\mathbf{I}+\Lambda\eta^{(m)})^{-1}(\Lambda^\top+\xi^{(m)}\mathbf{I} )
\end{align}
where $\Lambda$ is the shift operator, $\eta^{(m)}=\text{diag}(\eta_1^m,\eta_2^m,\cdots)$, $\xi^{(m)}=\text{diag}(\xi_0^m,\xi_1^m,\cdots)$. The equation \eqref{mixed} can now be rewritten as
\begin{align}\label{lax21}
(z+\p_{t_1})\Phi^{(m)}=M^{(m)}\Phi^{(m)},\quad M^{(m)}=\Lambda^\top+K^{(m)}\mathbf{I}-\Lambda J^{(m)}
\end{align}
with $K^{(m)}=\text{diag}(\mathcal{K}_0^m,\mathcal{K}_1^m,\cdots)$ and $J^{(m)}=\text{diag}(\mathcal{J}_0^m,\mathcal{J}_1^m,\cdots)$. The compatibility condition of \eqref{lax11} and \eqref{lax21} gives us
\begin{align*}
\p_{t_1}L^{(m)}=M^{(m+1)}L^{(m)}-L^{(m)}M^{(m)}.
\end{align*}

Moreover, for the discrete $t_1$-flow,
\begin{align*}
\mu_{i,j}^{t+1}=\mu_{i+1,j+1}^t+\lambda\mu_{i,j+1}^t+\lambda\mu_{i+1,j}^t+\lambda^2\mu_{i,j}^t,\quad \beta^{t+1}_j=\beta_{j+1}^t+\lambda\beta_j^t,
\end{align*}
the corresponding discrete PSOPs, adjacent family of PSOPs, and integrable lattice were given in \cite{chang182} as well.
In the following, some reductional cases are emphasised.

Inspired by the fact that the odd-order tau functions are independent on the even ones, namely,
for each $k\in\{1,\cdots,\ell\}$, $\{\tau_{2n+1,k}\}$ in \eqref{glv1}-\eqref{glv2} solve \eqref{glv}.
Moreover, since the single moments $\{\beta_j^{k}\}_{j\in\mathbb{N}}$ are independent with bi-moments $\{\mu_{i,j}\}_{i,j\in\mathbb{N}}$, we can expect that there are some relations between these moments.  Such relations are called as {\it moment constraints}. Using these constraints we can impose reductions on PSOPs and corresponding integrable lattices.
We start with the one-component case which can be easily extended to the multi-component case.

\subsection{Moment constraint \Rmnum{1}}
Consider the Laurent type PSOPs satisfying the relations
\begin{align*}
\mu_{i,j}=\mu_{i-1,j-1},\quad \beta_j=\beta_{j-1}.
\end{align*}
Similar to the Laurent SOPs case, one can check that $\tau_{2n}^{(m)}=\tau_{2n}^{(m+1)}$ and $\tau_{2n+1}^{(m)}=\tau_{2n+1}^{(m+1)}$. Also notice that $\{\beta_j\}_{j\in\mathbb{N}}$ are the same functions dependent on $t$, which means the multi-component case no longer exists. Therefore, from the expressions \eqref{psopschur}, one knows that the Christoffel transformation \eqref{psop-ct} is just a three term recurrence relation
\begin{align*}
Q_{n+1}(z)+ Q_{n}(z)=z(Q_n(z)+\eta_nQ_{n-1}(z)),\quad \eta_n=\frac{\tau_{n-1}\tau_{n+2}}{\tau_n\tau_{n+1}}.
\end{align*}
Moreover, if we assume the evolution
\begin{align*}
\p_{t_n}\mu_{i,j}=\mu_{i+n,j}+\mu_{i,j+n},\quad \p_{t_n}\beta_j=\beta_j,
\end{align*}
holds, then \eqref{mixed} is independent on the index $m$.  So one has
\begin{align*}
(z+\p_{t_1})Q_n(z)=Q_{n+1}(z)+\xi_n Q_n(z)-\eta_n Q_{n-1}(z),\quad \xi_n=\p_{t_1}\log \frac{\tau_{n+1}}{\tau_n}.
\end{align*}
 From the above identities, one can derive the time evolutions for the Laurent type PSOPs. Thus we have
 \begin{align*}
 \p_{t_1}Q_{n}+\eta_n\p_{t_1}Q_{n-1}=(\xi_n+\eta_n-1)Q_n+\eta_n(\xi_{n-1}-1)Q_{n-1}(z)-\eta_{n-1}\eta_n Q_{n-2}(z).
 \end{align*}
Since $\{Q_n(z)\}_{n\in\mathbb{N}}$ are monic PSOPs,  $\xi_n+\eta_n-1$  is equal to $0$. This fact is due to the reduction of the integrable hierarchy \eqref{ih}. Under the reduction, one has
 \begin{align}\label{1dlv}
 \tau_ns_{n+1-\ell}(-\tilde{\p}_t)\tau_{n+1}+\tau_{n+1}s_{n-\ell}(-\tilde{\p}_t)\tau_n
 =\tau_{n+1}s_{n+1-\ell}(-\tilde{\p}_t)\tau_n+\tau_{n+2}+s_{2n-\ell}(-\tilde{\p}_t)\tau_{n-1}.
 \end{align}
The famous Lotka-Volterra lattice
 \begin{align*}
 \tau_{n-1}\tau_{n+2}=(D_{t_1}+1)\tau_n\cdot\tau_{n+1}
 \end{align*}
 is the first example of \eqref{1dlv}.
 Therefore, a reduction of the Lax pair \eqref{lax11} and \eqref{lax21} can be obtained. That is, if we denote $\Phi=\left(Q_0(z),Q_1(z),\cdots\right)^\top$,  then
 \begin{align*}
 z\Phi=(\mathbf{I}+\Lambda\eta)^{-1}(\mathbf{I}+\Lambda^\top)\Phi,\quad \p_{t_1}\Phi=(\eta+\Lambda^\top)^{-1}(\mathfrak{a}+\mathfrak{b}\Lambda)\Phi.
 \end{align*}
 where $\eta=\text{diag}(\eta_1,\eta_2,\cdots)$, $\mathfrak{a}=\text{diag}(\eta_1(\xi_0-1),\eta(\xi_1-1),\cdots)$ and $\mathfrak{b}=\text{diag}(\eta_1\eta_2,\eta_2\eta_3,\cdots)$.

Note that the two families of tau functions
 \begin{align*}
 \tau_{2n}=\Pf(0,1,\cdots,2n-1),\quad \tau_{2n+1}=\Pf(d,0,\cdots,2n)
 \end{align*}
 with $\Pf(i,j)=\mu_{i,j}$ and $\Pf(d,i)=\beta$ are two different solutions of the 1d-Toda hierarchy, and thus the integrable hierarchy \eqref{1dlv} acts as the B\"acklund transformation of the 1d-Toda hierarchy \eqref{1dtoda}, which is a classical result in soliton theory.

\subsection{Moment constraint \Rmnum{2}}

The case we consider here is a rank two shift condition
\begin{align}\label{mc2}
\mu_{i,j+1}+\mu_{i+1,j}=\beta_{i+1}\beta_j-\beta_i\beta_{j+1}\quad\text{or \quad $\mathcal{U}\Lambda+\Lambda^\top\mathcal{U}=\Lambda^\top\beta\beta^\top-\beta\beta^\top\Lambda$},
\end{align}
where  $\mathcal{U}=(u_{i,j})_{i,j\in\mathbb{N}}$, $\Lambda$ is the shift operator, and $\beta=(\beta_0,\beta_1,\cdots)^\top$.
One can check that under this assumption, the evolutions $\p_{t_n}\beta_i=\beta_{i+n}$ and $\p_{t_n}\mu_{i,j}=\mu_{i+n,j}+\mu_{i,j+n}$ are consistent.
If we consider only one family of PSOPs, namely, for fixed $m$ we consider $\{Q^{(m)}_n(z)\}_{n\in\mathbb{Z}}$, then it is related to the so-called B-Toda lattice. The details of the corresponding integrable lattice and Lax pair was discussed in \cite{chang182}, and the corresponding integrable hierarchy was given in \cite{li19}. For completeness, we give a brief review here.

From the moment constraint \eqref{mc2}, one can find that the derivative of $t_1$-flow has two different expressions. One expression is the commuting flow, and the other one is in terms of single moments. If two new labels $d_0$ and $d_1$ are introduced such that
\begin{align*}
\p_{t_1}\Pf(i,j)=\Pf(i+1,j)+\Pf(i,j+1)=\Pf(d_0,d_1,i,j),\quad \p_{t_1}\Pf(d_0,i)=\Pf(d_0,i+1)=\Pf(d_1,i),
\end{align*}
then one gets
\begin{align}\label{derpsop}
\begin{aligned}
\p_{t_1}(\tau_{2n}^{(m)}Q_{2n}^{(m)}(z))&=z^{-m}\tau_{2n}^{(m)}\Pf(d_0,d_1,m,\cdots,m+2n,z),\\
\p_{t_1}(\tau_{2n+1}^{(m)}Q_{2n+1}^{(m)}(z))&=z^{-m}\tau_{2n+1}^{(m)}\Pf(d_1,m,\cdots,m+2n+1,z).
\end{aligned}
\end{align}
Using the Pfaffian identities
\begin{align*}
\Pf(d_0,d_1,\ast,m+2n,z)&\Pf(\ast)=\Pf(d_0,d_1,\ast)\Pf(\ast,m+2n,z)\\
&-\Pf(d_0,\ast,m+2n)\Pf(d_1,\ast,z)+\Pf(d_0,\ast,z)\Pf(d_1,\ast,m+2n)
\end{align*}
with $\{\ast\}=\{m,\cdots,m+2n-1\}$ and
\begin{align*}
\Pf(d_0,d_1,\ast,m+2n+1)&\Pf(\ast,z)=\Pf(d_0,\ast)\Pf(d_1,\ast,m+2n+1,z)\\
&-\Pf(d_1,\ast)\Pf(d_0,\ast,m+2n+1,z)+\Pf(\ast,m+2n+1)\Pf(d_0,d_1,\ast,z)
\end{align*}
with $\{\ast\}=\{m,\cdots,m+2n\}$, one can find the following time evolutions
\begin{align}\label{evo}
\p_{t_1}Q_n^{(m)}(z)+\mathcal{I}_n^m \p_{t_1}Q_{n-1}^{(m)}(z)=\mathcal{I}_n^m(\mathcal{K}_n^m+\mathcal{K}_{n-1}^m)Q_{n-1}^{(m)}(z),\quad \mathcal{I}_n^m=\frac{\tau_{n+1}^{(m)}\tau_{n-1}^{(m)}}{(\tau_n^{(m)})^2},
\end{align}
where $\mathcal{K}_n^m$ is defined in \eqref{jk}.
The compatibility condition of \eqref{mixed} and \eqref{evo} gives us the B-Toda lattice
\begin{align}\label{btoda}
D_{t_1}^2 \tau_n^{(m)}\cdot\tau_n^{(m)}=2D_{t_1}\tau_{n+1}^{(m)}\cdot\tau_{n-1}^{(m)}.
\end{align}
Since we focus on the adjacent PSOPs in this paper, in what follows, we show how to apply the Christoffel transformation to this kind of moment constraint.

By using the derivative formula \eqref{derpsop} and Pfaffian identities
\begin{align*}
\Pf(d_0,d_1,m,\ast,z)&\Pf(\ast)=\Pf(d_0,d_1,\ast)\Pf(m,\ast,z)\\
&-\Pf(d_0,m,\ast)\Pf(d_1,\ast,z)+\Pf(d_0,\ast,z)\Pf(d_1,m,\ast)
\end{align*}
with $\{\ast\}=\{m+1,\cdots,m+2n\}$ and
\begin{align*}
\Pf(d_0,d_1,m,\ast)&\Pf(\ast,z)=\Pf(d_0,\ast)\Pf(d_1,m,\ast,z)\\
&-\Pf(d_1,\ast)\Pf(d_0,m,\ast,z)+\Pf(d_0,d_1,\ast,z)\Pf(m,\ast)
\end{align*}
with $\{\ast\}=\{m+1,\cdots,m+2n+1\}$, we find the following formula
\begin{align}\label{psoplax1}
\begin{aligned}
\p_{t_1}Q_n^{(m)}(z)&+\p_{t_1}\log\frac{\tau_{n}^{(m)}}{\tau_n^{(m+1)}}Q_n^{(m)}(z)\\
&=z\left(
\frac{\tau_{n+1}^{(m)}\tau_{n-1}^{(m+1)}}{\tau_n^{(m)}\tau_n^{(m+1)}}\p_{t_1}Q_{n-1}^{(m+1)}(z)+\frac{D_{t_1}\tau_{n+1}^{(m)}\cdot\tau_{n-1}^{(m+1)}}{\tau_n^{(m)}\tau_{n}^{(m+1)}}Q_{n-1}^{(m+1)}(z)
\right),
\end{aligned}
\end{align}
which is of degree $n$ on both sides, and therefore, it must be
\begin{align*}
D_{t_1}\tau_n^{(m)}\cdot\tau_n^{(m+1)}=D_{t_1}\tau_{n+1}^{(m)}\cdot\tau_{n-1}^{(m+1)}.
\end{align*}
This integrable lattice is the B\"acklund transformation of the B-Toda lattice \eqref{btoda}.
We now proceed to the Lax-type equation. Noting that the equation \eqref{psoplax1} can be rewritten as
\begin{align*}
Q_n^{(m)}(z)+A_n^m \p_{t_1}Q_n^{(m)}(z)=z(Q_{n-1}^{(m+1)}(z)+B_n^m Q_{n-1}^{(m+1)}(z)),
\end{align*}
where the coefficients
\begin{align*}
A_n^m=(\p_{t_1}\log\frac{\tau_n^{(m)}}{\tau_n^{(m+1)}})^{-1},\quad B_n^m=\frac{\tau_{n+1}^{(m)}\tau_{n-1}^{(m+1)}}{D_{t_1}\tau_{n+1}^{(m)}\cdot\tau_{n-1}^{(m+1)}}.
\end{align*}
Combining the Christoffel transformation \eqref{psop-ct} and the formulas above, one can find
\begin{align}\label{b1}
z\p_{t_1}\Phi^{(m+1)}=L_1^{(m)}\p_{t_1}\Phi^{(m)}+L_2^{(m)}\Phi^{(m)},\quad \Phi^{(m)}=\left(
Q_0^{(m)}(z),Q_1^{(m)}(z),\cdots
\right)^\top,
\end{align}
where the coefficients matrices
\begin{align*}
L_1^{(m)}=(\mathfrak{a}_1+\Lambda\mathfrak{a}_2)^{-1}(\mathfrak{a}_3+\mathfrak{a}_4\Lambda^\top),\quad L_2^{(m)}=(\mathfrak{a}_1+\Lambda\mathfrak{a}_2)^{-1}\mathfrak{a}_5
\end{align*}
with $\mathfrak{a}_1=\text{diag}(B_1^m,B_2^m,\cdots),\,\mathfrak{a}_2=\text{diag}(\eta_1^mB_1^m,\eta_2^mB_2^m,\cdots),\,\mathfrak{a}_3=\text{diag}(\eta_0^mA_0^m,\eta_1^mA_1^m,\cdots)$, $\mathfrak{a}_4=\text{diag}(A_1^m,A_2^m,\cdots)$
and $\mathfrak{a}_5=\text{diag}(\eta_0^m-\xi_0^m,\eta_1^m-\xi_1^m,\cdots)$.
Moreover, with the notation $\mathfrak{b}_1=\text{diag}(\mathcal{I}_1^m,\mathcal{I}_2^m,\cdots),\,\mathfrak{b}_2=\text{diag}(\mathcal{I}_0^m(\mathcal{K}_0^m+\mathcal{K}_1^m),\cdots)$, the time evolution \eqref{evo} can be rewritten as
\begin{align}\label{b2}
\p_{t_1}\Phi^{(m)}=M^{(m)}\Phi^{(m)}, \quad M^{(m)}=(\mathfrak{b}_1+\Lambda^\top)^{-1}\mathfrak{b}_2.
\end{align}
and by denoting $\mathfrak{b}_3=\text{diag}(\xi_0^m,\xi_1^m,\cdots)$ and $\mathfrak{b}_4=\text{diag}(\eta_1^m,\eta_2^m,\cdots)$, so that the Christoffel transformation \eqref{psop-ct} has the form
\begin{align}\label{b3}
\Phi^{(m)}=zN^{(m)}\Phi^{(m+1)},\quad N^{(m)}=(\mathfrak{b}_3+\Lambda^\top)^{-1}(\mathbf{I}+\Lambda\mathfrak{b}_4).
\end{align}
The compatibility conditions of \eqref{b1}, \eqref{b2} and \eqref{b3} give two alternative forms
\begin{align*}
M^{(m+1)}=(L_1^{(m)}M^{(m)}+L_2^{(m)})N^{(m)},\quad \p_{t_1}N^{(m)}=\left(M^{(m)}-N^{(m)}(L_1^{(m)}M^{(m)}+L_2^{(m)})\right)N^{(m)}.
\end{align*}

\subsection{Moment constraint \Rmnum{3}}
In this section, we consider the moment constraint
\begin{align}\label{cons1}
\mu_{i,j+1}-\mu_{i+1,j}=2\beta_i\beta_j \quad \text{or \quad $\mathcal{U}\Lambda-\Lambda^\top \mathcal{U}=2\beta\beta^\top$},
\end{align}
where $\mathcal{U}=(u_{i,j})_{i,j\in\mathbb{N}}$, $\Lambda$ is the shift operator, and $\beta=(\beta_0,\beta_1,\cdots)^\top$.
From this constraint, we can express the bi-moments $\{\mu_{i,j}\}_{i,j\in\mathbb{N}}$ in terms of single moments $\{\beta_i\}_{i\in\mathbb{N}}$ as follows
\begin{align*}
\mu_{i,i+2k+1}=2\sum_{s=0}^{k-1}\beta_{i+s}\beta_{i+2k-s}+\beta_{i+k}^2,\quad \mu_{i,i+2k}=2\sum_{s=0}^{k-1}\beta_{i+s}\beta_{i+2k-1-s}.
\end{align*}
This kind of constraint is amount to
\begin{align}\label{evod}
\tau_{2n}^{(m)}\tau_{2n+2}^{(m)}=(\tau_{2n+1}^{(m)})^2,
\end{align}
followed by the observation
\begin{align*}
\Pf(d,m,\cdots,2n+m)\Pf(d,m,\cdots,2n+m)=\Pf(m,\cdots,2n-1+m)\Pf(m,\cdots, 2n+1+m).
\end{align*}

Moreover, the time evolutions on $\{\mu_{i,j}\}_{i,j\in\mathbb{N}}$ is dependent on $\{\beta_i\}_{i\in\mathbb{N}}$, so we can state the following proposition.
\begin{proposition}\label{prop43}
If the time evolutions on $\{\beta_i\}_{i\in\mathbb{N}}$ satisfy
\begin{align*}
\p_{t_{n}}\beta_i=\beta_{i+n},
\end{align*}
then the evolutions of $\{\mu_{i,j}\}_{i,j\in\mathbb{N}}$ satisfy
\begin{align*}
\p_{t_{n}}\mu_{i,j}=\mu_{i,j+n}+\mu_{i+n,j}.
\end{align*}
\end{proposition}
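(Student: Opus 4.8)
The plan is to introduce the defect
\begin{align*}
\nu_{i,j}:=\p_{t_n}\mu_{i,j}-\mu_{i+n,j}-\mu_{i,j+n}
\end{align*}
and to show $\nu_{i,j}\equiv 0$, which is precisely the assertion. I would extract two structural properties of $\nu$. First, the skew symmetry $\mu_{i,j}=-\mu_{j,i}$ of the bi-moments propagates to $\nu$: a direct substitution using $\mu_{i+n,j}=-\mu_{j,i+n}$ and $\mu_{i,j+n}=-\mu_{j+n,i}$ gives $\nu_{j,i}=-\nu_{i,j}$. Second, differentiating the constraint \eqref{cons1} in $t_n$ and feeding in $\p_{t_n}\beta_i=\beta_{i+n}$ will, after the $\beta$-products are re-expressed through shifted copies of \eqref{cons1}, collapse into a recursion for $\nu$ along anti-diagonals. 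The whole argument then rests on playing these two properties against each other.

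For the recursion I would differentiate $\mu_{i,j+1}-\mu_{i+1,j}=2\beta_i\beta_j$ to obtain
\begin{align*}
\p_{t_n}\mu_{i,j+1}-\p_{t_n}\mu_{i+1,j}=2\beta_{i+n}\beta_j+2\beta_i\beta_{j+n}.
\end{align*}
The key manoeuvre is to rewrite the two products on the right using \eqref{cons1} with the indices shifted, namely $2\beta_{i+n}\beta_j=\mu_{i+n,j+1}-\mu_{i+n+1,j}$ and $2\beta_i\beta_{j+n}=\mu_{i,j+n+1}-\mu_{i+1,j+n}$. Substituting these back and regrouping the six terms exactly according to the definition of $\nu$ yields $\nu_{i,j+1}-\nu_{i+1,j}=0$. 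Since this holds for all admissible $i,j$, the defect is constant along each anti-diagonal, that is, $\nu_{i,j}$ depends only on $i+j$, say $\nu_{i,j}=f(i+j)$.

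Finally I would close the argument by combining the two facts: since $\nu_{i,j}=f(i+j)$ is symmetric in $(i,j)$ while $\nu$ is skew symmetric, we get $f(i+j)=\nu_{i,j}=-\nu_{j,i}=-f(i+j)$, forcing $f\equiv 0$ and hence $\nu_{i,j}=0$ for all $i,j$. The main obstacle here is conceptual rather than computational: one must notice that the scalar recursion $\nu_{i,j+1}=\nu_{i+1,j}$ produced by the constraint is, by itself, far too weak to pin $\nu$ down, and that it is only the interplay with the skew symmetry inherited from the bi-moments that forces the defect to vanish. A more pedestrian alternative---substituting the explicit expansions of $\mu_{i,i+2k}$ and $\mu_{i,i+2k+1}$ in $\beta$ and differentiating termwise---also works, but it requires an even/odd case split and delicate index matching, so I would prefer the constraint-based derivation above.
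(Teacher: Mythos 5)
Your argument is correct, and it takes a genuinely different route from the paper. The paper's proof substitutes the explicit expansions of $\mu_{i,i+2k}$ and $\mu_{i,i+2k+1}$ in terms of the $\beta$'s, differentiates termwise, and re-sums after an even/odd case split with several index shifts; your proof never touches those expansions. Instead you differentiate the defining constraint \eqref{cons1} once, recognise $2\beta_{i+n}\beta_j$ and $2\beta_i\beta_{j+n}$ as the same constraint evaluated at shifted indices, and deduce that the defect $\nu_{i,j}$ satisfies $\nu_{i,j+1}=\nu_{i+1,j}$, hence depends only on $i+j$; the skew symmetry inherited from $\mu$ then forces it to vanish. (Equivalently, $\nu_{i,i}=0$ follows directly and settles the even anti-diagonals, but your skew-symmetry argument handles both parities at once.) What your approach buys: it is shorter, it avoids the delicate index bookkeeping in which the paper's printed computation in fact contains some slips, and---since it uses only that $\p_{t_n}$ applied to the right-hand side of the constraint reproduces the two index-shifted copies of that right-hand side---it transfers verbatim to the multi-component and complex variants of the constraint discussed later in the same section, where the paper merely asserts the analogue of the proposition holds ``similarly''. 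The one point you should make explicit is that \eqref{cons1} holds identically in $t$ because the bi-moments are defined through the $\beta$'s at each time, so differentiating the constraint is legitimate.
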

\begin{proof}
Here we prove the time evolution on $\mu_{i,i+2k}$, while for $\mu_{i,i+2k+1}$ the statement of time evolution could be verified similarly. For odd-order time flow $t_{2n+1}$, we have
\begin{align*}
\p_{t_{2n+1}}\mu_{i,i+2k}=2\sum_{s=0}^{k-1}\left(\beta_{i+2n+1+s}\beta_{i+2k-1-s}+
\beta_{i+s}\beta_{i+2k+2n-s}
\right)
\end{align*}
and
\begin{align*}
&\sum_{s=0}^{k-1}\beta_{i+s}\beta_{i+2(k+n)-s}=\sum_{s=0}^{n+k-1}\beta_{i+s}\beta_{i+2(k+n)-s}-\sum_{s=0}^{n-1}\beta_{i+k+s}\beta_{i+2n+k-s},\\
&\sum_{s=0}^{k-1}\beta_{i+2n+1+s}\beta_{i+2k-1-s}=\sum_{s=n+1}^{k-1}\beta_{i+n+s}\beta_{i+2k+n-s}+\sum_{s=0}^n\beta_{i+k+s}\beta_{i+2n+k-s},
\end{align*}
So one can immediately get the result.
For the even-order flow $t_{2n}$, we have
\begin{align*}
\p_{t_{2n}}\mu_{i,i+2k}=2\sum_{s=0}^{k-1}(
\beta_{i+2n+s}\beta_{i+2k-s-1}+\beta_{i+s}\beta_{i+2k+2n-1-s}
).
\end{align*}
Since the following identities hold
\begin{align*}
\sum_{s=0}^{k-1}\beta_{i+2n+s}\beta_{i+2k-s-1}=\sum_{s=n}^{k-1}\beta_{i+n+s}\beta_{i+2k+n-s-1}+\sum_{s=0}^{n-1}\beta_{i+m+s}\beta_{i+2n+m-s-1},\\
\sum_{s=0}^{k-1}\beta_{i+s}\beta_{i+2k+2n-s-1}=\sum_{s=0}^{k+n-1}\beta_{i+s}\beta_{i+2n+2k-s-1}-\sum_{s=0}^{n-1}\beta_{i+k+s}\beta_{i+2n+k-s-1},
\end{align*}
the derivatives for even flows could be verified.
\end{proof}
Therefore, the evolutions on the moments satisfy the equation \eqref{te2}, and we can get the one-component integrable system
\begin{align*}
&\tau_{2n+2}^{(m)}\tau_{2n-1}^{(m+1)}=D_{t_1}\tau_{2n}^{(m+1)}\cdot\tau_{2n+1}^{(m)}+\tau_{2n}^{(m)}\tau_{2n+1}^{(m+1)},\\
&\tau_{2n+1}^{(m)}\tau_{2n-2}^{(m+1)}=D_{t_1}\tau_{2n-1}^{(m+1)}\cdot\tau_{2n}^{(m)}+\tau_{2n-1}^{(m)}\tau_{2n}^{(m+1)},\\
&\tau_{2n}^{(m)}\tau_{2n+2}^{(m)}=(\tau_{2n+1}^{(m)})^2,\quad \tau_{2n}^{(m+1)}\tau_{2n+2}^{(m+1)}=(\tau_{2n+1}^{(m+1)})^2.
\end{align*}
If we rewrite
\begin{align*}
\tau_{2n}^{(m)}=f_{2n},\quad \tau_{2n}^{(m+1)}=f_{2n+1},\quad \tau_{2n+1}^{(m)}=g_{2n+1},\quad \tau_{2n+1}^{(m+1)}=g_{2n+2},
\end{align*}
then the equation above can be written in a unified form
\begin{align*}
D_{t_1}g_n\cdot f_n-g_{n+1}f_{n-1}+g_{n-1}f_{n+1}=0,\quad f_{n+1}f_{n-1}=g_n^2,
\end{align*}
which is the so-called modified KdV equation \cite{hirota97}.

\begin{proposition}\label{prop2}
The polynomials $\{Q_{2n}^{(m)}(z)\}_{n\in\mathbb{N}}$ satisfy the following three term recurrence relation
\begin{align}\label{lax1}
zQ_{2n}^{(m)}(z)=Q_{2n+1}^{(m)}(z)+\mathcal{K}_{2n}^{m}Q_{2n}^{(m)}(z)+ \mathcal{J}_{2n}^mQ_{2n-1}^{(m)}(z),
\end{align}
where $\mathcal{K}_{2n}^m$ and $\mathcal{J}_{2n}^m$ are given in \eqref{jk}.
\end{proposition}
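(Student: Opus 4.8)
The plan is to derive \eqref{lax1} from the general mixed relation \eqref{mixed}. That relation is available under the present constraint because, by Proposition \ref{prop43}, the commuting flows \eqref{te2} are compatible with \eqref{cons1}. Taking the even index in \eqref{mixed} gives
\[
(z+\p_{t_1})Q_{2n}^{(m)}(z)=Q_{2n+1}^{(m)}(z)+\mathcal{K}_{2n}^mQ_{2n}^{(m)}(z)-\mathcal{J}_{2n}^mQ_{2n-1}^{(m)}(z),
\]
with $\mathcal{K}_{2n}^m,\mathcal{J}_{2n}^m$ as in \eqref{jk}. Since \eqref{lax1} differs from this only in that the left-hand side is $zQ_{2n}^{(m)}$ and the coefficient of $Q_{2n-1}^{(m)}$ has the opposite sign, the proposition is equivalent to the single derivative identity
\[
\p_{t_1}Q_{2n}^{(m)}(z)=-2\mathcal{J}_{2n}^mQ_{2n-1}^{(m)}(z),
\]
so the whole task is to establish this under \eqref{cons1}.

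To isolate $\p_{t_1}Q_{2n}^{(m)}$, I would first use the raising formula of Proposition \ref{prop1} (whose even-order polynomial coincides with $Q_{2n}^{(m)}$, and which is the counterpart of \eqref{eq1}),
\[
(z+\p_{t_1})\big(\tau_{2n}^{(m)}Q_{2n}^{(m)}(z)\big)=z^{-m}\Pf(m,\cdots,m+2n-1,m+2n+1,z),
\]
valid for the flow \eqref{te2} alone, and subtract the plain product $z\,\tau_{2n}^{(m)}Q_{2n}^{(m)}=z^{-m}\,z\,\Pf(m,\cdots,m+2n,z)$; the two degree-$(2n+1)$ parts cancel, leaving an object of degree $2n$ whose leading coefficient is $\p_{t_1}\tau_{2n}^{(m)}$. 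Writing $\{\ast\}=\{m,\cdots,m+2n-1\}$, so that $\tau_{2n}^{(m)}=\Pf(\ast)$, $\p_{t_1}\tau_{2n}^{(m)}=\Pf(m,\cdots,m+2n-2,m+2n)$ and $\tau_{2n+1}^{(m)}=\Pf(d,m,\cdots,m+2n)$, and using the factorisation \eqref{evod} in the form $\tau_{2n+2}^{(m)}\tau_{2n}^{(m)}=(\tau_{2n+1}^{(m)})^2$ to simplify the coefficient $\mathcal{J}_{2n}^m(\tau_{2n}^{(m)})^2/\tau_{2n-1}^{(m)}=\tau_{2n+1}^{(m)}$, the derivative identity becomes equivalent to the purely Pfaffian statement
\[
\Pf(\ast)\big(\Pf(\ast,m+2n+1,z)-z\,\Pf(\ast,m+2n,z)\big)=\p_{t_1}\tau_{2n}^{(m)}\cdot\Pf(\ast,m+2n,z)-2\,\tau_{2n+1}^{(m)}\,\Pf(d,\ast,z).
\]
Here the crucial input is \eqref{cons1}: substituting $\mu_{i,j+1}-\mu_{i+1,j}=2\beta_i\beta_j=2\Pf(d,i)\Pf(d,j)$ into the expansion of the difference of Pfaffians on the left collapses it onto the $d$-labelled Pfaffian $\Pf(d,\ast,z)$, the factor $2$ in \eqref{cons1} being the source of the factor $2$ on the right.

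The hard part will be exactly this Pfaffian reduction. In contrast with the rank-two constraint \eqref{mc2}, whose antisymmetric right-hand side permitted the flow to be linearised through auxiliary labels $d_0,d_1$ and $\p_{t_1}$ to be written as one augmented Pfaffian (cf. \eqref{derpsop}), the term $2\beta_i\beta_j$ in \eqref{cons1} is symmetric in $i,j$, so no one-step linearisation of $\p_{t_1}$ is available; the constraint has to be inserted entry by entry when expanding the difference of Pfaffians, and I would organise this with a four-term Pfaffian identity on a suitable splitting of the index set. To pin down the constant I would check the leading coefficient of the derivative identity: by \eqref{psopschur} its $z^{2n-1}$-coefficient equals $-\p_{t_1}^2\log\tau_{2n}^{(m)}$ on the left and $-2\mathcal{J}_{2n}^m$ on the right, so the identity forces $\p_{t_1}^2\log\tau_{2n}^{(m)}=2\mathcal{J}_{2n}^m$, which follows from \eqref{cons1} together with \eqref{evod}. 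Substituting the derivative identity back into the even case of \eqref{mixed} then yields \eqref{lax1}; the odd-index bookkeeping is handled exactly as in the proof of the Christoffel transformation, after the appropriate shift of $\{\ast\}$.
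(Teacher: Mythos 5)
Your opening reduction is correct and clean: taking the even case of \eqref{mixed}, the claimed recurrence \eqref{lax1} is equivalent to the single derivative identity $\p_{t_1}Q_{2n}^{(m)}(z)=-2\mathcal{J}_{2n}^mQ_{2n-1}^{(m)}(z)$, and your translation of that identity into the Pfaffian statement
\begin{align*}
\Pf(\ast)\bigl(\Pf(\ast,m+2n+1,z)-z\,\Pf(\ast,m+2n,z)\bigr)=\p_{t_1}\tau_{2n}^{(m)}\,\Pf(\ast,m+2n,z)-2\,\tau_{2n+1}^{(m)}\,\Pf(d,\ast,z)
\end{align*}
is consistent (the simplification $\mathcal{J}_{2n}^m(\tau_{2n}^{(m)})^2/\tau_{2n-1}^{(m)}=\tau_{2n+1}^{(m)}$ via \eqref{evod} checks out). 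The problem is that you stop there: you explicitly label this Pfaffian identity ``the hard part,'' offer only a vague plan (``a four-term Pfaffian identity on a suitable splitting of the index set''), and support it with a leading-coefficient consistency check that is itself asserted without proof. Note also that in the paper the derivative identity you are targeting is precisely Corollary 4.9, which is \emph{deduced from} Proposition \ref{prop2} together with \eqref{mixed}; so your reduction, as written, replaces the proposition by an equivalent unproven statement of the same difficulty rather than resolving it. The proposal is therefore incomplete at exactly the point where the moment constraint \eqref{cons1} has to do real work.

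The missing mechanism is the one the paper actually uses: under \eqref{cons1}, products of two $d$-labelled Pfaffians factor into products of $d$-free Pfaffians, in the same way that \eqref{evod} arises from $\Pf(d,m,\cdots,2n+m)^2=\Pf(m,\cdots,2n-1+m)\Pf(m,\cdots,2n+1+m)$. Applying this to $\Pf(d,m,\cdots,m+2n+1,z)\Pf(d,m,\cdots,m+2n)$ gives directly
\begin{align*}
Q_{2n+1}^{(m)}(z)=\tfrac{1}{2}zQ_{2n}^{(m)}(z)+\tfrac{1}{2}\frac{1}{\tau_{2n}^{(m)}z^m}\Pf(m,\cdots,m+2n-1,m+2n+1,z)-\tfrac{1}{2}\mathcal{I}_{2n+2}^mQ_{2n}^{(m)}(z),
\end{align*}
with the factor $2$ emerging automatically from the quadratic factorisation, and the remaining Pfaffian is then handled by the standard three-term identity with $\{\ast\}=\{m,\cdots,m+2n-1\}$. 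Your sketch never invokes this factorisation; inserting $\mu_{i,j+1}-\mu_{i+1,j}=2\beta_i\beta_j$ ``entry by entry'' into an expansion of a difference of Pfaffians is not a workable substitute, precisely because (as you yourself observe) the symmetric right-hand side of \eqref{cons1} does not linearise the flow through auxiliary labels the way \eqref{mc2} does. Until the displayed Pfaffian identity (or equivalently the factorisation step) is actually established, the proof is not complete.
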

\begin{proof}
The proof is based on the expansion of the polynomial $Q_{2n+1}^{(m)}$, that is, the expansion of
\begin{align*}
\Pf(d,m,\cdots,m+2n+1,z)\Pf(d,m,\cdots,m+2n).
\end{align*}
Similar to  \eqref{evod}, one can get
\begin{align*}
Q_{2n+1}^{(m)}(z)=\frac{1}{2}zQ_{2n}^{(m)}(z)+\frac{1}{2}\frac{1}{\tau_{2n}^{(m)}z^m}\Pf(m,\cdots,m+2n-1,m+2n+1,z)-\frac{1}{2}\mathcal{I}_{2n+2}^mQ_{2n}^{(m)}(z).
\end{align*}
Now we need to deal with the mid term on the right hand side.
By using the  Pfaffian identity
\begin{align*}
\Pf(d,\ast,&m+2n,m+2n+1,z)\Pf(\ast)=\Pf(d,\ast,m+2n)\Pf(\ast,m+2n+1,z)\\
&\quad-\Pf(d,\ast,m+2n+1)\Pf(\ast,m+2n,z)+\Pf(d,\ast,z)\Pf(\ast,m+2n,m+2n+1)
\end{align*}
with $\{\ast\}=\{m,\cdots,m+2n-1\}$ and equation \eqref{evod}, one could find that it is equal to
\begin{align*}
Q_{2n+1}^{(m)}(z)+\mathcal{I}_{2n+1}^{m}Q_{2n}^{(m)}(z)-\mathcal{J}_n^m Q_{2n-1}^{(m)}(z).
\end{align*}
Noting $\mathcal{K}_{2n}^m=\mathcal{K}_{2n+1}^m$ and combining these gives the result.
\end{proof}

Moreover, the even-order polynomials satisfy the following time evolutions.
\begin{corollary}\label{prop3}
The following time evolutions for the specific $\{Q_{2n}^{(m)}(z)\}_{n\in\mathbb{N}}$ hold
\begin{align}\label{lax2}
\p_{t_1}Q_{2n}^{(m)}(z)=-2\mathcal{J}_{2n}^m Q_{2n-1}^{(m)}(z).
\end{align}
\end{corollary}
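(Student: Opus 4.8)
The plan is to obtain the stated evolution by subtracting the three-term recurrence \eqref{lax1} of Proposition \ref{prop2} from the general mixed derivative formula \eqref{mixed}, both of which are available in the present setting. The key observation is that these two identities share the combination $Q_{2n+1}^{(m)}(z)+\mathcal{K}_{2n}^mQ_{2n}^{(m)}(z)$, and they differ only in the sign of the $\mathcal{J}_{2n}^m$-term and in whether the left-hand side carries the extra $\p_{t_1}$; subtracting isolates exactly the $t_1$-derivative.

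First I would confirm that equation \eqref{mixed} is valid under moment constraint \Rmnum{3}. By Proposition \ref{prop43}, the flow $\p_{t_n}\beta_i=\beta_{i+n}$ forces $\p_{t_n}\mu_{i,j}=\mu_{i,j+n}+\mu_{i+n,j}$, so the commuting flows \eqref{te2} hold and the general identity \eqref{mixed} applies with $n$ replaced by $2n$. This gives
\begin{align*}
(z+\p_{t_1})Q_{2n}^{(m)}(z)=Q_{2n+1}^{(m)}(z)+\mathcal{K}_{2n}^mQ_{2n}^{(m)}(z)-\mathcal{J}_{2n}^mQ_{2n-1}^{(m)}(z),
\end{align*}
with $\mathcal{K}_{2n}^m$ and $\mathcal{J}_{2n}^m$ the coefficients defined in \eqref{jk}.

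Next I would invoke the three-term recurrence \eqref{lax1} from Proposition \ref{prop2},
\begin{align*}
zQ_{2n}^{(m)}(z)=Q_{2n+1}^{(m)}(z)+\mathcal{K}_{2n}^mQ_{2n}^{(m)}(z)+\mathcal{J}_{2n}^mQ_{2n-1}^{(m)}(z),
\end{align*}
and subtract it from the previous display. On the left the terms $zQ_{2n}^{(m)}(z)$ cancel, leaving $\p_{t_1}Q_{2n}^{(m)}(z)$; on the right the terms $Q_{2n+1}^{(m)}(z)$ and $\mathcal{K}_{2n}^mQ_{2n}^{(m)}(z)$ cancel while the two $\mathcal{J}_{2n}^m$-contributions, carrying opposite signs, combine into $-2\mathcal{J}_{2n}^mQ_{2n-1}^{(m)}(z)$. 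This yields \eqref{lax2}.

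The identity is therefore purely algebraic once the two relations are in hand, so I do not expect a genuine obstacle. The only point meriting care is that the coefficients $\mathcal{K}_{2n}^m$ and $\mathcal{J}_{2n}^m$ appearing in \eqref{mixed} and in \eqref{lax1} are literally the same quantities from \eqref{jk}; this is what guarantees the exact cancellation of the $\mathcal{K}$-term and the clean doubling of the $\mathcal{J}$-term, and it is already secured by the fact that both Proposition \ref{prop2} and the general formula \eqref{mixed} are derived from the same Pfaffian expressions for the adjacent PSOPs under the flows \eqref{te2}.
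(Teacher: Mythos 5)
Your argument is correct, and it is organized differently from the paper's. The paper proves the corollary by going back to the Pfaffian level: it applies the derivative formula of Proposition \ref{prop1} to write $(z+\p_{t_1})(\tau_{2n}^{(m)}Q_{2n}^{(m)}(z))$ as the single Pfaffian $\Pf(m,\dots,m+2n-1,m+2n+1,z)$, expands that Pfaffian with the same identity used in the proof of Proposition \ref{prop2}, and then invokes \eqref{evod} to match coefficients. Your route --- subtracting the three-term recurrence \eqref{lax1} from the even-index case of \eqref{mixed} --- reaches the same conclusion without reopening any Pfaffian computation, and it makes completely transparent why $Q_{2n+1}^{(m)}$ and the $\mathcal{K}_{2n}^m$-term drop out and why the factor $2$ appears. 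The substance is the same: the even-index case of \eqref{mixed} is precisely the identity the paper re-derives in its first two displays, and \eqref{lax1} is where the constraint \eqref{evod} has already been consumed; you simply package the statement as a genuine corollary of the two preceding results. The one hypothesis that needed checking --- that \eqref{mixed} remains valid under moment constraint \Rmnum{3} --- you address correctly via Proposition \ref{prop43}, which guarantees the commuting flows \eqref{te2}, and the exact cancellation you rely on is legitimate because both \eqref{mixed} and \eqref{lax1} use the same coefficients $\mathcal{K}_{2n}^m$, $\mathcal{J}_{2n}^m$ from \eqref{jk}.
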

\begin{proof}
As was shown in Proposition \ref{prop1}, the following identity is true,
\begin{align*}
(z+\p_{t_1})(\tau_{2n}^{(m)}Q_{2n}^{(m)}(z))=\Pf(m,\cdots,m+2n-1,m+2n+1,z).
\end{align*}
According to the above Pfaffian identity, one has that the right-hand side is
\begin{align*}
\tau_{2n}^{(m)}\left(
Q_{2n+1}^{(m)}(z)+\p_{t_1}\tau_{2n+1}^{(m)} Q_{2n}^{(m)}(z)-\mathcal{J}_{2n}^m Q_{2n-1}^{(m)}(z)
\right).
\end{align*}
By dividing $\tau_{2n}^{(m)}$ on both sides and using \eqref{evod}, we get \eqref{lax2}.
\end{proof}
The spectral problem for the odd-order polynomials is more difficult. The observation
\begin{align}\label{oddpsops}
\langle z^{m+1}Q_{2n+1}^{(m)}(z), z^{m+j}\rangle
=\frac{\Pf(d,m,\cdots,m+2n+1,m+j+1)}{\tau_{2n+1}^{(m)}}-\Pf(d,m+j+1)\frac{\tau_{2n+2}^{(m)}}{\tau_{2n+1}^{(m)}},
\end{align}
implies the following proposition.
\begin{proposition}\label{prop4}
The following spectral problem for the specific odd order PSOPs holds
\begin{align}\label{oddspectral}
z\left(
Q_{2n+1}^{(m)}(z)-\mathcal{J}_{2n}^m Q_{2n-1}^{(m)}(z)
\right)&=Q_{2n+2}^{(m)}+\mathcal{K}_{2n}^m Q_{2n+1}^{(m)}(z)+\left(\alpha_n^m+\mathcal{K}_{2n}^m \p_{t_1}\log\tau_{2n+1}^{(m)}\right)Q_{2n}^{(m)}(z)\nonumber\\
&\quad-\mathcal{K}_{2n}^m\mathcal{J}_{2n}^m Q_{2n-1}^{(m)}(z)-\mathcal{J}_{2n-1}^m \mathcal{J}_{2n}^mQ_{2n-2}^{(m)}(z),
\end{align}
where $\alpha_n^m$ is given in \eqref{coe1} and $\mathcal{J}_{n}^m$ and $\mathcal{K}_n^m$ are given in \eqref{jk}.
\end{proposition}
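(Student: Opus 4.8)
The plan is to treat \eqref{oddspectral} as an identity between two polynomials of degree $2n+2$ and to pin it down coefficientwise, following the Pfaffian-expansion strategy already used for Proposition \ref{prop2} and Corollary \ref{prop3} but now routed through the $t_1$-flow. Since Proposition \ref{prop43} guarantees that under the constraint \eqref{cons1} the moments evolve by \eqref{te2}, the general derivative formula \eqref{mixed} is available. First I would apply \eqref{mixed} at two indices to write $zQ_r^{(m)}=Q_{r+1}^{(m)}+\mathcal{K}_r^mQ_r^{(m)}-\mathcal{J}_r^mQ_{r-1}^{(m)}-\p_{t_1}Q_r^{(m)}$ for $r=2n+1$ and $r=2n-1$; forming the left-hand combination $z(Q_{2n+1}^{(m)}-\mathcal{J}_{2n}^mQ_{2n-1}^{(m)})$ then reduces the claim to evaluating the derivative combination $\p_{t_1}Q_{2n+1}^{(m)}-\mathcal{J}_{2n}^m\p_{t_1}Q_{2n-1}^{(m)}$ and collecting the purely spectral remainder.

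For the spectral remainder the leading term is forced to be $Q_{2n+2}^{(m)}$, since both sides are monic of degree $2n+2$, and the surviving coefficients reduce to ratios of $\tau$-functions of the form \eqref{jk}. Here the constraint consequence \eqref{evod}, $\tau_{2n}^{(m)}\tau_{2n+2}^{(m)}=(\tau_{2n+1}^{(m)})^2$, does the essential bookkeeping: it yields $\mathcal{K}_{2n}^m=\mathcal{K}_{2n+1}^m$ and lets me replace $\p_{t_1}\log\tau_{2n+1}^{(m)}$ by the average $\tfrac12(\p_{t_1}\log\tau_{2n}^{(m)}+\p_{t_1}\log\tau_{2n+2}^{(m)})$. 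The role of the correction $-\mathcal{J}_{2n}^mQ_{2n-1}^{(m)}$ on the left is precisely to absorb the lower-order contribution produced by the partial (non-)orthogonality of the odd family recorded in \eqref{oddpsops}, so that after the reduction only $Q_{2n}^{(m)},Q_{2n-1}^{(m)},Q_{2n-2}^{(m)}$ remain with the stated coefficients.

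The step I expect to be the main obstacle is the odd-order time derivative, since, unlike the even case settled in Corollary \ref{prop3}, there is no immediate closed form for $\p_{t_1}Q_{2n+1}^{(m)}$. I would obtain the required combination by differentiating the even-order spectral relation of Proposition \ref{prop2} in $t_1$ and substituting Corollary \ref{prop3}, which expresses it through $zQ_{2n-1}^{(m)}$ and lower polynomials. The genuinely new coefficient $\alpha_n^m$ is then pinned down most cleanly from the Schur-polynomial expansion \eqref{psopschur}: its subleading term gives $-\p_{t_1}\log\tau_{2n+1}^{(m)}$ for the degree-$2n$ coefficient of $Q_{2n+1}^{(m)}$, hence $-\p_{t_1}^2\log\tau_{2n+1}^{(m)}$ for that of $\p_{t_1}Q_{2n+1}^{(m)}$, which fixes $\alpha_n^m$. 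Substituting this back, invoking \eqref{evod} and $\mathcal{K}_{2n}^m=\mathcal{K}_{2n+1}^m$ to collapse the coefficients, and matching the five polynomials $Q_{2n+2}^{(m)},\dots,Q_{2n-2}^{(m)}$ term by term then yields \eqref{oddspectral} with the coefficients of \eqref{jk} and $\alpha_n^m$ from \eqref{coe1}.
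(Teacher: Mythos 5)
Your overall frame---use \eqref{mixed} at indices $2n+1$ and $2n-1$ to trade $z\bigl(Q_{2n+1}^{(m)}-\mathcal{J}_{2n}^mQ_{2n-1}^{(m)}\bigr)$ for the derivative combination $\p_{t_1}Q_{2n+1}^{(m)}-\mathcal{J}_{2n}^m\p_{t_1}Q_{2n-1}^{(m)}$---is legitimate and is essentially the paper's subsequent corollary (equation \eqref{lax4}) read in reverse. The genuine gap is in the step you yourself flag as the main obstacle: producing that derivative combination independently. Differentiating the three-term relation of Proposition \ref{prop2} in $t_1$ and substituting Corollary \ref{prop3} does not deliver it. First, the sign is wrong: since $\p_{t_1}Q_{2n}^{(m)}=-2\mathcal{J}_{2n}^mQ_{2n-1}^{(m)}$, differentiating $zQ_{2n}^{(m)}=Q_{2n+1}^{(m)}+\mathcal{K}_{2n}^mQ_{2n}^{(m)}+\mathcal{J}_{2n}^mQ_{2n-1}^{(m)}$ yields the \emph{plus} combination $\p_{t_1}Q_{2n+1}^{(m)}+\mathcal{J}_{2n}^m\p_{t_1}Q_{2n-1}^{(m)}$, and a single equation cannot separate it from the minus combination you need. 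Second, the left side becomes $-2\mathcal{J}_{2n}^m\,zQ_{2n-1}^{(m)}$, and $zQ_{2n-1}^{(m)}$ \emph{by itself} has no short expansion in the $Q$-basis: the whole content of Proposition \ref{prop4} is that only the corrected combination $z\bigl(Q_{2n+1}^{(m)}-\mathcal{J}_{2n}^mQ_{2n-1}^{(m)}\bigr)$ truncates. So your reduction lands on an object that is exactly as unknown as the one you started with (at a lower index), and unwinding it recursively produces a non-local sum over all lower odd polynomials rather than the five-term right-hand side of \eqref{oddspectral}. The appeal to \eqref{oddpsops} to ``absorb the lower-order contribution'' is the right intuition, but as written it is an assertion, not an argument.

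The paper's proof avoids this circularity entirely: it expands $z\bigl(Q_{2n+1}^{(m)}-\mathcal{J}_{2n}^mQ_{2n-1}^{(m)}\bigr)$ in the skew-orthogonal basis of Pfaffians $\Pf(m,\dots,m+2i,z)$ and $\Pf(m,\dots,m+2i-1,m+2i+1,z)$ (equation \eqref{expand}), uses \eqref{oddpsops} and skew-orthogonality against $z^{m+j}$ for $j=0,\dots,2n-2$ to kill all but five coefficients, evaluates the survivors at $j=2n-1,2n,2n+1$, and only then converts back to PSOPs via the derivative identity of Proposition \ref{prop1} together with Proposition \ref{prop2} and Corollary \ref{prop3}. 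If you want to salvage your route, you would need an independent derivation of \eqref{lax4} --- for instance from the Pfaffian expansion of $(z+\p_{t_1})\bigl(\tau_{2n+1}^{(m)}Q_{2n+1}^{(m)}\bigr)$ --- rather than from Proposition \ref{prop2}; your method for fixing $\alpha_n^m$ from the subleading Schur coefficient is fine once the rest of the expansion is under control.
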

\begin{proof}
From \eqref{oddpsops} we obtain
\begin{align*}
&\left\langle z^{m+1}\left(Q_{2n+1}^{(m)}(z)-\mathcal{J}_{2n}^m Q_{2n-1}^{(m)}(z)
\right),z^{m+j}\right\rangle\\
&\quad =
\frac{\Pf(d,m,\cdots,m+2n+1,m+j+1)}{\tau_{2n+1}^{(m)}}-\frac{\tau_{2n+2}^{(m)}\Pf(d,m,\cdots,m+2n-1,m+j+1)}{\tau_{2n}^{(m)}\tau_{2n+1}^{(m)}},
\end{align*}
which is equal to $0$ when $j=0,\cdots,2n-2$. At this step, we need to consider a set of basis for the SOPs  and then transform the set into the basis for the PSOPs. Expanding the left hand of \eqref{oddspectral} in terms of SOPs, we get
\begin{align}\label{expand}
\begin{aligned}
& z\left(Q_{2n+1}^{(m)}(z)-\mathcal{J}_{2n}^m Q_{2n-1}^{(m)}(z)
\right)=Q_{2n+2}^{(m)}(z)\\
&\quad+z^{-m}\left(\sum_{i=0}^{n}\frac{\alpha_i^m}{\tau_{2i}^{(m)}}\Pf(m,\cdots,m+2i,z)+\sum_{i=0}^{n}\frac{\beta_i^m}{\tau_{2i}^{(m)}}\Pf(m,\cdots,m+2i-1,m+2i+1,z)\right).
\end{aligned}
\end{align}
Taking the skew inner product with $\langle z^m\cdot, z^{m+j}\rangle$ for $j=0,\cdots,2n-2$, we obtain that $\alpha_i^m=0$ for $i=0,\cdots,n-2$ and $\beta_j^m=0$ whenever $j=0,\cdots,n-1$. For $j=2n-1$, $j=2n$ and $j=2n+1$, we have
\begin{align}\label{coe1}
\alpha_{n-1}^{m}=-\mathcal{J}_{2n-1}^m\mathcal{J}_{2n}^m,\quad\beta_n^{m}=\mathcal{K}_{2n}^m,\quad \alpha_{n}^m=\mathcal{J}_{2n+1}^{(m)}-\frac{s_2(-\tilde{\p}_t)\tau_{2n+1}^{(m)}}{\tau_{2n+1}^{(m)}}+\frac{s_2(-\tilde{\p}_t)\tau_{2n}^{(m)}}{\tau_{2n}^{(m)}},
\end{align}
where $s_2(-\tilde{\p}_t)$ is the Schur function mentioned before. From Prop. \ref{prop1}, we get
\begin{align*}
&z\left(Q_{2n+1}^{(m)}(z)-\mathcal{J}_{2n}^m Q_{2n-1}^{(m)}(z)
\right)\\
&\quad=Q_{2n+2}^{(m)}(z)+\alpha_n^m Q_{2n}^{(m)}(z)+\alpha_{n-1}^{(m)}Q_{2n-2}^{(m)}(z)+\frac{\beta_n^m}{\tau_{2n}^{(m)}}(z+\p_{t_1})(\tau_{2n}^{(m)}Q_{2n}^{(m)}(z)),
\end{align*}
Substituting $zQ_{2n}^{(m)}(z)$ and $\p_{t_1}Q_{2n}^{(m)}(z)$ into this identity, we obtain \eqref{oddspectral}.
\end{proof}

Analogously to the even-order polynomials, the odd-order polynomials also have the time evolutions.
\begin{corollary}
The specific odd order polynomials $\{Q_{2n+1}^{(m)}(z)\}_{n\in\mathbb{N}}$ admit the following time evolutions
\begin{align}\label{lax4}
\begin{aligned}
\p_{t_1}Q_{2n+1}^{(m)}(z)-\mathcal{J}_{2n}^m\p_{t_1}Q_{2n-1}^{(m)}(z)&=\left(
\mathcal{J}_{2n+1}^m+\mathcal{J}_{2n}^m-\alpha_n^m-\mathcal{K}_n^m\p_{t_1}\log\tau_{2n+1}^{(m)}
\right)Q_{2n}^{(m)}(z)\\
&\quad+\mathcal{J}_{2n}^m\left(
\mathcal{K}_{2n}^m-\mathcal{K}_{2n-1}^m
\right)Q_{2n-1}^{(m)}(z)-2\mathcal{J}_{2n-1}^m\mathcal{J}_{2n}^m Q_{2n-2}^{(m)}(z).
\end{aligned}
\end{align}
\end{corollary}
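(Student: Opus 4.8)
The plan is to derive \eqref{lax4} as the time-flow companion of the odd-order spectral problem \eqref{oddspectral}, in exact parallel with the way Corollary~\ref{prop3} was obtained for the even-order polynomials. Two previously established relations supply everything that is needed: the mixed Lax relation \eqref{mixed}, which remains valid under the present setting because Proposition~\ref{prop43} ensures that \eqref{cons1} is compatible with the commuting flows \eqref{te2}, and the spectral problem \eqref{oddspectral} of Proposition~\ref{prop4}. No further Pfaffian identity will be required; the whole argument is a linear recombination of these two identities.

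First I would write \eqref{mixed} at the two odd indices $2n+1$ and $2n-1$ and solve each for the $t_1$-derivative,
\begin{align*}
\p_{t_1}Q_{2n+1}^{(m)}(z)&=-zQ_{2n+1}^{(m)}(z)+Q_{2n+2}^{(m)}(z)+\mathcal{K}_{2n+1}^mQ_{2n+1}^{(m)}(z)-\mathcal{J}_{2n+1}^mQ_{2n}^{(m)}(z),\\
\p_{t_1}Q_{2n-1}^{(m)}(z)&=-zQ_{2n-1}^{(m)}(z)+Q_{2n}^{(m)}(z)+\mathcal{K}_{2n-1}^mQ_{2n-1}^{(m)}(z)-\mathcal{J}_{2n-1}^mQ_{2n-2}^{(m)}(z).
\end{align*}
Forming the combination $\p_{t_1}Q_{2n+1}^{(m)}-\mathcal{J}_{2n}^m\p_{t_1}Q_{2n-1}^{(m)}$ that stands on the left of \eqref{lax4}, the two terms carrying the spectral parameter $z$ assemble into $-z\big(Q_{2n+1}^{(m)}(z)-\mathcal{J}_{2n}^mQ_{2n-1}^{(m)}(z)\big)$, which is precisely the negative of the left-hand side of \eqref{oddspectral}.

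The next step is to substitute \eqref{oddspectral} to eliminate all explicit $z$-dependence, leaving a pure expansion in the basis $\{Q_j^{(m)}(z)\}$. I would then collect the coefficient of each $Q_j^{(m)}(z)$ for $j=2n+2,\dots,2n-2$. The $Q_{2n+2}^{(m)}$ contributions cancel identically, and the $Q_{2n+1}^{(m)}$ contributions cancel once the identity $\mathcal{K}_{2n}^m=\mathcal{K}_{2n+1}^m$ is invoked; this is the same consequence of the constraint that was already used in Proposition~\ref{prop2}, and it follows from \eqref{evod} together with the definition of $\mathcal{K}$ in \eqref{jk}. The remaining coefficients of $Q_{2n}^{(m)}$, $Q_{2n-1}^{(m)}$ and $Q_{2n-2}^{(m)}$, after inserting the definitions \eqref{jk} of $\mathcal{J}$ and $\mathcal{K}$ and the expression for $\alpha_n^m$ obtained in the proof of Proposition~\ref{prop4}, reorganise into the stated right-hand side of \eqref{lax4}.

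I expect the only real difficulty to lie in this final coefficient-collection: several $\mathcal{J}$--$\mathcal{K}$ products must be tracked simultaneously, and in particular the $Q_{2n-2}^{(m)}$ term receives contributions both from \eqref{mixed} at index $2n-1$ and from the substitution of \eqref{oddspectral}, so care is needed to combine them correctly and to apply the $\mathcal{K}$-relation consistently. Everything else is mechanical once the spectral cancellation in the second step has been recognised.
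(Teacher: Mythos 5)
Your proposal is correct and follows essentially the same route as the paper: the paper's own proof re-derives the mixed relation \eqref{mixed} at the odd index $2n+1$ via a Pfaffian identity and then states that Proposition \ref{prop4} (the spectral problem \eqref{oddspectral}) yields \eqref{lax4}, which is exactly the linear recombination and coefficient-collection you spell out, including the use of $\mathcal{K}_{2n}^m=\mathcal{K}_{2n+1}^m$ from \eqref{evod}. Your version is simply a more explicit account of the same argument.
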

\begin{proof}
This proof using the following identity
\begin{align*}
\frac{1}{\tau_{2n+1}^{(m)}}(z+\p_{t_1})(\tau_{2n+1}^{(m)}Q_{2n+1}^{(m)})=z^{-m}\Pf(d,m,\cdots,m+2n,m+2n+2,z).
\end{align*}
From the Pfaffian identity
\begin{align*}
\Pf(d,\ast,&m+2n+1,m+2n+2)\Pf(\ast,z)=\Pf(d,\ast)\Pf(\ast,m+2n+1,m+2n+2,z)\\
&-\Pf(\ast,m+2n+1)\Pf(d,\ast,m+2n+2,z)+\Pf(\ast,m+2n+2)\Pf(d,\ast,m+2n+1,z),
\end{align*}
with $\{\ast\}=\{m,\cdots,m+2n\}$, we obtain
\begin{align*}
(z+\p_{t_1})Q_{2n+1}^{(m)}(z)=Q_{2n+2}^{(m)}(z)+\mathcal{K}_{2n+1}^{(m)}Q_{2n+1}^{(m)}(z)-\mathcal{J}_{2n+1}^{m}Q_{2n}^{(m)}(z).
\end{align*}
Now Proposition \ref{prop4} implies \eqref{lax4}.
\end{proof}
Thus, denoting $\Phi=\left(
Q_0^{(m)}(z),Q_1^{(m)}(z),\cdots
\right)^\top$,
we get the Lax pair
\begin{align*}
z\Phi=L\Phi, \quad \p_{t_1}\Phi=M\Phi,
\end{align*}
where $L$ and $M$ are constructed with  \eqref{lax1}, \eqref{lax2}, \eqref{oddspectral} and \eqref{lax4}. Another possible way to find the Lax pair is due to the idea in \cite{chang16}; one can make the use of the SOPs as the eigenfunctions  and regard the PSOPs as the auxiliary polynomials to obtain the Lax pair.

\begin{remark}
The essence of this reduction is a $1+1$ dimension integrable lattice, and therefore the eigenfunction of this Lax pair involves $\{Q_{n}^{(m)}(z)\}_{n\in\mathbb{N}}$ only, which requires a higher-order time flow $t_2$ to take place of $\tau^{(m+1)}_n$.
\end{remark}

\subsubsection{Multi-component case}
From \cite{hirota97} it follows that the above reduction has a multi-component extension. If we extend the constraint \eqref{cons1} to
\begin{align*}
\mu_{i,j+1}-\mu_{i+1,j}=2\sum_{a,b=1}^N \beta_i^{a}\beta_j^b,
\end{align*}
then we can express the bi-moments $\{\mu_{i,j}\}_{i,j\in\mathbb{N}}$ in terms of single moments $\{\beta_j^a\}_{j\in\mathbb{N}}$ for $a=1,\cdots,N$, and
\begin{align*}
\mu_{i,i+2k+1}=\sum_{a,b=1}^N\left(
2\sum_{s=0}^{k-1}\beta_{i+s}^a\beta_{i+2k-s}^b+\beta_{i+k}^a\beta_{i+k}^b
\right),\quad \mu_{i,i+2k}=\sum_{a,b=1}^N \left(
2\sum_{s=0}^{k-1}\beta_{i+k}^a\beta_{i+2k-1-s}^b
\right).
\end{align*}
Similar to Proposition \ref{prop43}, one can show that if $\p_{t_n}\beta_j^a=\beta_{j+n}^a$ for all $a=1,\cdots,N$, then $\mu_{i,j}$ satisfies $\p_{t_n}\mu_{i,j}=\mu_{i+n,j}+\mu_{i,j+n}$. Moreover, the even and odd tau functions are connected with each other by the formula
\begin{align*}
\tau_{2n}^{(m)}\tau_{2n+2}^{(m)}=\sum_{a,b=1}^N \tau_{2n+1,a}^{(m)}\tau_{2n+1,b}^{(m)}.
\end{align*}
Thus we  get the coupled modified KdV (cmKdV) hierarchy
\begin{align*}
&\tau_{2n}^{(m+1)}s_{2n+1-\ell_1}(-\tilde{\p}_t)\tau_{2n+1,k}^{(m)}+\tau_{2n+1,k}^{(m+1)}s_{2n-\ell_1}(-\tilde{\p}_t)\tau_{2n}^{(m)}\\
&\qquad\qquad\qquad\qquad=\tau_{2n+1,k}^{(m)}s_{2n+1-\ell_1}(-\tilde{\p}_t)\tau_{2n}^{(m+1)}+\tau_{2n+2}^{(m)}s_{2n-\ell_1}(-\tilde{\p}_t)\tau_{2n-1,k}^{(m+1)},\\
&\tau_{2n+1,k}^{(m+1)}s_{2n+2-\ell_2}(-\tilde{\p}_t)\tau_{2n+2}^{(m)}+\tau_{2n+2}^{(m+1)}s_{2n+1-\ell_2}(-\tilde{\p}_t)\tau_{2n+1,k}^{(m)}\\
&\qquad\qquad\qquad\qquad=\tau_{2n+2}^{(m)}s_{2n+2-\ell_2}(-\tilde{\p}_t)\tau_{2n+1,k}^{(m+1)}+\tau_{2n+3,k}^{(m)}s_{2n+1-\ell_2}(-\tilde{\p}_t)\tau_{2n}^{(m+1)},\\
&\tau_{2n}^{(m)}\tau_{2n+2}^{(m)}=\sum_{a,b=1}^N \tau_{2n+1,a}^{(m)}\tau_{2n+1,b}^{(m)},\qquad \tau_{2n}^{(m+1)}\tau_{2n+2}^{(m+1)}=\sum_{a,b=1}^N \tau_{2n+1,a}^{(m+1)}\tau_{2n+1,b}^{(m+1)}.
\end{align*}
The first example of this hierarchy is the cmKdV equation.

\subsubsection{Complex multi-component case}
The multi-component case  admits a complex version as well.
If the single moments $\{\beta_j^a\}_{j\in\mathbb{N},a\in\{1,\cdots,N\}}$ are complex and satisfy the time evolutions
\begin{align*}
\p_{t_n}\beta_j^a=\beta_{j+n}^a,\quad \p_{t_n}\bar{\beta}_j^a=\bar{\beta}_{j+n}^a
\end{align*}
then the bi-moments $\{\mu_{i,j}\}_{i,j\in\mathbb{N}}$ can be expressed in terms of single moments as follows.
\begin{align*}
\mu_{i,j+1}-\mu_{i+1,j}=2\sum_{a,b=1}^N \beta_i^a\bar{\beta}_j^b,
\end{align*}
and satisfy the identity $\p_{t_n}\mu_{i,j}=\mu_{i+n,j}+\mu_{i,j+n}$.
Thus, we have
\begin{align*}
\tau_{2n}^{(m)}\tau_{2n+2}^{(m)}=\sum_{a,b=1}^N \tau_{2n+1,a}^{(m)}\bar{\tau}_{2n+1,b}^{(m)}.
\end{align*}
Such kind of constraint may lead us to the discrete vector NLS hierarchy \cite{maruno08}
\begin{align*}
&\tau_{2n}^{(m+1)}s_{2n+1-\ell_1}(-\tilde{\p}_t)\tilde{\tau}_{2n+1,k}^{(m)}+\tilde{\tau}_{2n+1,k}^{(m+1)}s_{2n-\ell_1}(-\tilde{\p}_t)\tau_{2n}^{(m)}\\
&\qquad\qquad\qquad\qquad=\tilde{\tau}_{2n+1,k}^{(m)}s_{2n+1-\ell_1}(-\tilde{\p}_t)\tau_{2n}^{(m+1)}+\tau_{2n+2}^{(m)}s_{2n-\ell_1}(-\tilde{\p}_t)\tilde{\tau}_{2n-1,k}^{(m+1)},\\
&\tilde{\tau}_{2n+1,k}^{(m+1)}s_{2n+2-\ell_2}(-\tilde{\p}_t)\tau_{2n+2}^{(m)}+\tau_{2n+2}^{(m+1)}s_{2n+1-\ell_2}(-\tilde{\p}_t)\tilde{\tau}_{2n+1,k}^{(m)}\\
&\qquad\qquad\qquad\qquad=\tau_{2n+2}^{(m)}s_{2n+2-\ell_2}(-\tilde{\p}_t)\tilde{\tau}_{2n+1,k}^{(m+1)}+\tilde{\tau}_{2n+3,k}^{(m)}s_{2n+1-\ell_2}(-\tilde{\p}_t)\tau_{2n}^{(m+1)},\\
&\tau_{2n}^{(m)}\tau_{2n+2}^{(m)}=\sum_{a,b=1}^N \tau_{2n+1,a}^{(m)}\bar{\tau}_{2n+1,b}^{(m)},\qquad \tau_{2n}^{(m+1)}\tau_{2n+2}^{(m+1)}=\sum_{a,b=1}^N \tau_{2n+1,a}^{(m+1)}\bar{\tau}_{2n+1,b}^{(m+1)},
\end{align*}
where $\tilde{\tau}$ means that $\tau$ and $\bar{\tau}$ both satisfy those equations.

\section{Concluding remarks}\label{con}

In this article, we consider the Christoffel transformation for SOPs and PSOPs along with their applications in integrable systems. The eigenfunctions of the Lax pair are given in terms of SOPs or PSOPs and integrable hierarchies are expressed in terms of the coefficients of polynomials. The advantage of SOPs lies in the fact that the basis is skew orthogonal and therefore, it's better for us to choose the SOPs as the basis to expand  some polynomials, see for example, equation \eqref{expand}. On the other hand, the advantage of PSOPs is that we can naturally introduce the odd-order tau functions that may shed lights into novel integrable hierarchies as well as iterative algorithms, for example, in the design of Grave-Morris' vector Pad\'e approximation. Thus, both of the polynomials have their own strengths and should be properly chosen while using.

\section*{Acknowledgement}
The authors would like to thank Dr. Hiroshi Miki and Prof. Xing-Biao Hu for helpful discussions and comments.
G. Yu is supported by National Natural Science Foundation of China (Grant no. 11871336).

\small
\bibliographystyle{abbrv}

\begin{thebibliography}{10}

\bibitem{adler99}
M. Adler E. Horozov and P. van Moerbeke.
\newblock The Pfaff lattice and skew-orthogonal polynomials.
\newblock \emph{Int. Math Res. Not.}, 11 (1999), 569-588.

\bibitem{adler022}
M. Adler, T. Shiota and P. van Moerbeke.
\newblock Pfaff $\tau$-functions.
\newblock \emph{Math. Ann.}, 322 (2002), 423-476.

\bibitem{adler02}
M. Adler and P. van Moerbeke.
\newblock Toda versus Pfaff lattice and related polynomials.
\newblock \emph{Duke Math J.}, 112 (2002), 1-58.

\bibitem{alv17}
C. \'Alvarez-Fern\'andez, G. Ariznabarreta, J. Garcia-Ardila, M. Ma\~nas and F. Marcell\'an.
\newblock  Christoffel transformations for matrix orthogonal polynomials in the real line and the non-Abelian 2D Toda lattice hierarchy.
\newblock \emph{Int. Math. Res. Not.}, 5 (2017), 1285-1341.

\bibitem{ari14}
G. Ariznabarreta and M. Ma\~nas.
\newblock  Multivariate orthogonal polynomials and integrable systems.
\newblock \emph{Adv. Math.}, 302 (2014), 628-739.

\bibitem{ari18}
G. Ariznabarreta and M. Ma\~nas.
\newblock  Christoffel transformations for multivariate orthogonal polynomials.
\newblock \emph{J. Approx. Theory}, 225 (2018), 242-283.

\bibitem{auyang87}
H. Au-Yang and J. Perk.
\newblock Critical correlations in a Z-invariant inhomogeneous Ising model,
\newblock \emph{Physica A}, 144 (1987). 44-104,

\bibitem{chang14}
X. Chang, X. Chen and X. Hu and H. Tam.
\newblock About several classes of bi-orthogonal polynomials and discrete integrable systems.
\newblock \emph{J. Phys. A,} 48 (2015), 015204.

\bibitem{chang182}
X. Chang, Y. He, X. Hu and S. Li.
\newblock  Partial-skew-orthogonal polynomials and related integrable lattice with Pfaffian tau-functions.
\newblock \emph{Commun. Math. Phys.}, 364 (2018), 1069-1119.

\bibitem{chang16}
X. Chang, Y. He, X. Hu, S. Li, H. Tam and Y. Zhang.
\newblock Coupled modified KdV equations, skew orthogonal polynomials, convergence acceleration algorithms and Laurent property.
\newblock \emph{ Sci. China Math.}, 61 (2018), 1063-1078.

\bibitem{deift99}
P. Deift.
\newblock Orthogonal polynomials and random matrices: a Riemann-Hilbert approach,
\newblock Courant Lecture Notes in Mathematics, 3.
\newblock \emph{New York University, Courant Institute of Mathematical Sciences, New York; American Mathematical Society, Providence, RI,} 1999


\bibitem{forrester10}
P. Forrester.
\newblock {Log-gases and random matrices}.
\newblock  {London Mathematical Society Monographs Series 34.}
\newblock \emph{ Princeton University Press, Princeton, NJ, } 2010.

\bibitem{forrester16}
P. Forrester and M. Kieburg.
\newblock  Relating the Bures Measure to the Cauchy Two-Matrix Model.
\newblock \emph{Commun. Math. Phys.}, 342 (2016), 151-187.


\bibitem{hjn16}
J. Hietarinta, N. Joshi and F. Nijhoff.
\newblock Discrete systems and integrability.
\newblock Cambridge Texts in Applied Mathematics.
\newblock \emph{Cambridge University Press, Cambridge,} 2016.



\bibitem{hirota97}
R. Hirota.
\newblock {``Molecule solutions'' of coupled modified KdV equations}.
\newblock \emph{J. Phys. Soc. Jpn.}, 66 (1997), 2530-2532.

\bibitem{jimbo83}
M. Jimbo and T. Miwa.
\newblock Solitons and infinite dimensional Lie algebras.
\newblock \emph{Publ. RIMS, Kyoto Univ.}, 19 (1983), 943-1001.


\bibitem{kac98}
V. Kac and J. van de Leur.
\newblock The geometry of spinors and the multicomponent BKP and DKP hierarchies.
\newblock \emph{The bispectral problem (Montreal, PQ, 1997),} 159-202,
\newblock CRM Proc. Lecture notes 14,
\newblock \emph{Amer. Math. Soc., Providence, RI,} 1998.


\bibitem{kodama07}
Y. Kodama and V. Pierce.
\newblock Geometry of the Pfaff Lattices.
\newblock \emph{Int. Math. Res. Not.}, rnm 120, 2007.


\bibitem{kodama09}
Y. Kodama and V. Pierce.
\newblock The Pfaff lattice on symplectic matrices.
\newblock \emph{J. Phys. A}, 43 (2010), 055206.

\bibitem{li19}
S. Li and G. Yu.
\newblock  Rank shift conditions and reductions of 2d-Toda theory.
\newblock \emph{arXiv: 1908.08725}, 2019.

\bibitem{li20}
S. Li.
\newblock Discrete integrable systems and condensation algorithms for Pfaffians.
\newblock \emph{arXiv: 2006.06221}, 2020.

\bibitem{maruno08}
K. Maruno and B. Prinari.
\newblock Determinant and Pfaffian solutions of the strong coupling limit of integrable discrete NLS systems.
\newblock \emph{Inverse Prob.}, 24 (2008), 055011.



\bibitem{mehta04}
M. Mehta.
\newblock Random matrices.
\newblock  Pure and Applied Mathematics (Amsterdam), 142.
\newblock \emph{Elsevier/Academic Press, Amsterdam,} 2004.


\bibitem{miki12}
H. Miki, H. Goda and S. Tsujimoto.
\newblock Discrete spectral transformations of skew orthogonal polynomials and associated discrete integrable systems.
\newblock \emph{SIGMA}, 8 (2012), 008, 14 pages.

\bibitem{miki20}
H. Miki.
\newblock  Laurent skew orthogonal polynomials and related symplectic matrices.
\newblock \emph{arXiv:2004.13906}, 2020.

\bibitem{perk84}
 J. Perk and H. Au-Yang.
 \newblock  Ising models and soliton equations.
 \newblock  \emph{Proc. III International Symposium on Selected Topics in Statistical
   Mechanics}, Dubna, August 22-26, 1984, (JINR, Dubna, USSR, 1985),
   Vol. II, pp. 138-151.


\bibitem{spicer06}
P. Spicer.
\newblock On orthogonal polynomials and related discrete integrable systems.
\newblock Ph. D thesis, the University of Leeds, 2006.

\bibitem{spicer11}
P. Spicer, F. Nijhoff and P. van der Kamp.
\newblock  Higher analogues of the discrete-time Toda equation and the quotient-difference algorithm.
\newblock \emph{Nonlinearity}, 24 (2011) 2229.


\bibitem{stembridge90}
J. Stembridge.
\newblock  Nonintersecting paths, Pfaffians, and plane partitions.
\newblock \emph{Adv. Math.}, 83 (1990), 96-131.

\bibitem{tsujimoto00}
S. Tsujimoto and K. Kondo.
\newblock The molecule solutions of discrete integrable systems and
orthogonal polynomials (in Japanese).
\newblock \emph{ RIMS K?ky?roku Bessatsu} 1170 (2000), 1šC8.




\bibitem{vanassche17}
W. van Assche.
\newblock {Orthogonal Polynomials and Painlev\'e equations}.
\newblock  Australian Mathematical Society Lecture Series 27.
\newblock \emph{Cambridge University Press, Cambridge}, 2018.


\bibitem{vandeleur15}
J. van de Leur and A. Orlov.
\newblock  Pfaffian and determinantal tau functions.
\newblock \emph{ Lett. Math. Phys.}, 105 (2015), 1499-1531.



\end{thebibliography}

\def\cydot{\leavevmode\raise.4ex\hbox{.}}
  \def\cydot{\leavevmode\raise.4ex\hbox{.}} \def\cprime{$'$}

\end{document}